\DeclareFontFamily{OT1}{pzc}{}
\DeclareFontShape{OT1}{pzc}{m}{it}{2 <-> pzcmi8t}{}
\DeclareFontShape{OT1}{pzc}{m}{it}{<-> [1.5] pzcmi8t}{}
\DeclareMathAlphabet{\mathpzc}{OT1}{pzc}{m}{it}
\definecolor{red}{rgb}{1,0,0}	
\newtheorem{theorem}{Theorem}[section]
\newtheorem{lemma}[theorem]{Lemma}
\newtheorem{definition}[theorem]{Definition}
\newcommand{\beq}{\begin{equation}}
\newcommand{\eeq}{\end{equation}}
\newcommand{\bea}{\begin{eqnarray}}
\newcommand{\eea}{\end{eqnarray}}
\newcommand{\be}{\begin{equation}}
\newcommand{\ee}{\end{equation}}
\newcommand{\beqr}{\begin{eqnarray}}
\newcommand{\eeqr}{\end{eqnarray}}
\newcommand{\beqrx}{\begin{eqnarray*}}
\newcommand{\eeqrx}{\end{eqnarray*}}
\newcommand{\ba}{\left[ \begin{array}}
\newcommand{\ea}{\\ \end{array} \right]}
\newcommand{\bi}{\begin{itemize}}
\newcommand{\ei}{\end{itemize}}
\def\xb{{\bf x}}
\def\sb{{\bf s}}
\def\yb{{\bf y}}
\def\vb{{\bf v}}
\def\hb{{\bf h}}
\def\Hb{{\bf H}}
\def\Rbb{\mathbb{R}}
\def\Nc{\mathcal{N}}
\def\Ec{\mathcal{E}}
\def\SNR{\mbox{SNR}}
\def\oneb{\text{\boldmath $1$}}
\begin{document}

\title{On the Mixing Time of Markov Chain Monte Carlo for Integer Least-Square Problems}



\author{
    Weiyu Xu\\
    ECE, University of Iowa\\
    weiyu-xu@uiowa.edu
  \and
    Georgios Alexandros Dimakis\\
    EE,USC\\
    dimakis@usc.edu
    \and
    Babak Hassibi\\
    EE, Caltech\\
    hassibi@systems.caltech.edu
}

\maketitle

\begin{abstract}
In this paper, we study the mixing time of Markov Chain Monte Carlo (MCMC) for integer least-square (LS) optimization problems. It is found that the mixing time of MCMC for integer LS problems depends on the structure of the underlying lattice. More specifically, the mixing time of MCMC is closely related to whether there is a local minimum in the lattice structure. For some lattices, the mixing time of the Markov chain is independent of the signal-to-noise ($SNR$) ratio and grows polynomially in the problem dimension; while for some lattices, the mixing time grows unboundedly as $SNR$ grows. Both theoretical and empirical results suggest that to ensure fast mixing, the temperature for MCMC should often grow positively as the $SNR$ increases. We also derive the probability that there exist local minima in an integer least-square problem, which can be as high as $\frac{1}{3}-\frac{1}{\sqrt{5}}+\frac{2\arctan(\sqrt{\frac{5}{3}})}{\sqrt{5}\pi}$.

\end{abstract}

\section{Introduction}
\label{sec:Introduction}

The integer least-square problem is an NP-hard optimization problem which has received attention in many research areas, for example, communications, global navigation satellite systems, radar imaging, Monte Carlo second-moment estimation, bioinformatics and lattice design \cite{Agrell_et_al_02, Borno}. A computationally efficient way of exactly solving the integer LS problem is the sphere decoder (SD) \cite{Damen_et_al, Hochwald_Ten-Brink_03, Hassibi_1, Agrell_et_al_02}. It is known that for a moderate problem size and a suitable range of Signal-to-Noise Ratios ($SNR$), SD has low computational complexity, which can be significantly smaller than an exhaustive search solver. But for a large problem size and fixed $SNR$, the average computational complexity of SD is still exponential in the problem dimension\cite{Ottersten_05}. So for large problem sizes, (for example large-scale Multiple-Input Multiple-Output (MIMO) systems with many transmit and receive antennas), SD still has high computational complexity and is thus computationally infeasible.

Unlike SD, MCMC algorithms perform a random walk over the signal space in the hope of finding the optimal solution. Gibbs sampling (or Glauber dynamics) is a popular MCMC method which performs the random walk according to the transition probability determined by the stationary distribution of a reversible Markov chain \cite{Levin} \cite{Haggstrom_02}. The Gibbs sampler has been proposed for detection purposes in wireless communication \cite{Zhu_Farhang_Boroujeny_05, Wang_Poor_03} (see also the references therein). These MCMC methods are able to provide the optimal solution if they are run for a sufficiently long time; and empirically MCMC methods are observed to provide near-optimal solutions in a reasonable amount of computational time even for large problem dimensions \cite{Zhu_Farhang_Boroujeny_05, Wang_Poor_03, Hassibi_Globecom}.  \cite{Hassibi_Globecom} gave a characterization of the MCMC temperature parameter such that the optimal solution can be found in polynomial time assuming stationary distribution has been reached. However, the understanding of the mixing time (or the convergence rate, namely how fast a Markov chain converges to the stationary distribution) of these MCMC methods is still limited\cite{Hassibi_Globecom, Farhang_Boroujeny_06, ChenRong}.


 In this paper, we are interested in deriving the mixing time of the Gibbs sampler for integer LS problems. We derive upper and lower bounds on the mixing time and show how the mixing time is related to the structures of integer LS problems. Our work furthers the understanding of the mixing time in MCMC for integer LS problems, and is helpful in optimizing the MCMC parameter for better computational performance.

  Our paper is organized as follows. In Section \ref{sec:System_model} we present the system model. The MCMC method and related background knowledge are introduced in Section \ref{sec:Gibbs_sampling}. Section \ref{sec:mixing_time_ortho}, \ref{sec:mixing_time_local},\ref{sec:local_minimum} and \ref{sec:choice_alpha} derive the bounds on the mixing time and discuss how to optimize MCMC parameters to ensure fast mixing. Simulation results are given in Section \ref{sec:sim_results}. Section \ref{sec:Conclusion} concludes this paper. 
\section{System Model}
\label{sec:System_model}
 In this paper, we consider a real-valued integer least-square problem with $N$ transmit and $N$ receive dimensions, targeting applications in block-fading MIMO antenna systems with known channel coefficients.  The received signal $\yb \in \Rbb^N$ can be expressed as
\begin{equation}
\label{EQ:sig_model_matrix}
\yb = \sqrt{\frac{\SNR}{N}}{\Hb{\xb}} + \vb \ ,
\end{equation}
where ${\xb} \in \Omega^{N}$ is the transmitted signal, and $\Omega$ denotes the constellation set. To simplify the derivations in the paper we will assume that $\Omega = \left\{\pm 1\right\}$. $\vb \in \Rbb^{N}$ is the noise vector where each entry is Gaussian $\Nc \left(0,1\right)$ and independent identically distributed (i.i.d.), and $\Hb \in \Rbb^{N \times N}$ denotes the channel matrix with i.i.d. $\Nc \left(0,1\right)$ entries. The signal-to-noise ratio is defined as
\begin{equation}
\label{EQ:SNR}
\begin{split}
\SNR &= \frac{\Ec \left\|\sqrt{\frac{\SNR}{N}}\Hb{\xb}\right\|^2}{\Ec \|\vb\|^2} \ ,
\end{split}
\end{equation}
which is done in order to take into account the total transmit energy.
Without loss of generality, we assume that the all minus one vector was transmitted, ${\xb} = -\oneb$. Therefore
\begin{equation}
\yb = \vb-\sqrt{\frac{\SNR}{N}}\Hb\oneb \ .
\end{equation}

 To minimize the average error probability, we need to perform Maximum Likelihood Sequence Detection (here simply referred to as ML detection) given by
\begin{equation}
\label{EQ:Original_minimization_problem}
 {\xb}^* = \arg	\mathop {\text{min} }\limits_{{\xb} \in \Omega^{N}} \ \  \left\| \yb - \sqrt{\frac{\SNR}{N}}\Hb {\xb} \right\|^2,
\end{equation}
which is exactly an integer LS problem.
\section{Gibbs Sampling and Mixing Time}
\label{sec:Gibbs_sampling}
In this paper, we investigate one kind of MCMC detector called Gibbs sampler which follows a reversible Markov chain and asymptotically converges to the stationary distribution \cite{Mackay_03}. Under the stationary distribution, the Gibbs sampler has a certain probability of visiting the optimal solution. So if run for sufficiently long time, the Gibbs sampler will be able to find the optimal solution to \eqref{EQ:Original_minimization_problem}.


More specifically, the Gibbs sampler starts with a certain $N$-dimensional feasible vector $\hat{\xb}^{(0)}$ among the set $\{-1,+1\}^{N}$ of cardinality $2^{N}$. Then the Gibbs sampler performs a random walk over $\{-1,+1\}^{N}$ based on the following reversible Markov chain. Assume that we are at time index $l$ and the current state of the Markov chain is $\hat{\xb}^{(l)} \in \{-1,+1\}^{N}$.  In the next step, the Markov chain uniform randomly picks one position index $j$ out of $\{1,2, ..., N\}$ and keeps the symbols of $\hat{\xb}^{(l)}$ at other positions fixed. Then the Gibbs sampler computes the conditional probability of transferring to each constellation point at the $j$-th index. With the symbols at the $(N-1)$ other positions fixed, the probability that the $j$-th symbol adopts the value $\omega$, is given by
\begin{equation}
\label{Eq:Prob_of_symbol_MCMC}
p\left( {\hat{\xb}_j^{(l+1)} = \omega \left|{ \theta }\right.} \right) =  \frac{e^{-\frac{1}{2\alpha^2} \left\| \yb - \sqrt{\frac{\SNR}{N}} \Hb \hat{\xb}_{j \left|{\omega}\right.} \right\|^2   }}{ \sum\limits_{\hat{\xb}_{j \left|{\tilde{\omega}}\right.} \in \Omega}{e^{-\frac{1}{2\alpha^2} \left\| \yb - \sqrt{\frac{\SNR}{N}} \Hb \hat{\xb}_{j \left|{ \tilde{\omega} }\right.} \right\|^2 } }}  \ ,
\end{equation}
where $\tilde{\xb}_{j \left|{\omega}\right. }^T \triangleq \left[\hat{\xb}_{1:j-1}^{(l)}, \omega, \hat{\xb}_{j+1:N}^{(l)} \right]^T$ and $\theta = \left\{ \hat{\xb}^{(l)}, j, \yb, \Hb \right\}$. So conditioned on the $j$-th position is chosen, the Gibbs sampler will with probability $p\left( {\hat{\xb}_j^{(l+1)} = \omega \left|{\theta}\right.} \right)$ keep $\omega$ at the $j$'th index in estimated symbol vector. The initialization of the symbol vector $\hat{\xb}^{(0)}$ can either be chosen randomly or other heuristic solutions. $\alpha$ represents a tunable positive parameter which controls the mixing time of the Markov chain, this parameter is also sometimes called the ``temperature". The smaller $\alpha$ is, the larger the stationary probability for the optimal solution will be, and the easier for the Gibbs sampler to find the optimal solution in the stationary distribution.  But as we will show in the paper, there is often a lower bound on $\alpha$, in order to ensure the fast mixing of the Markov chain to the stationary distribution.

It is not hard to see that the Markov chain of Gibbs sampler is reversible and has $2^{N}$ states with the stationary distribution $e^{-\frac{1}{2\alpha^2} \left\| \yb - \sqrt{\frac{\SNR}{N}} \Hb \hat{\xb} \right\|^2   }$ for an state $\hat{\xb}$. The $2^N \times 2^N$ transition matrix is denoted by $P$, and the element $P_{i,j}$ in the $i$-th ( $1\leq i \leq N$) row and $j$-th ( $1\leq j \leq N$) column is the probability of transferring to state $j$ conditioned on the previous state is $i$. So each row of $P$ sums up to $1$ and the transition matrix after $t$ iterations is $P^{t}$. Denoting the vector for the stationary distribution as $\mathbf{\pi}$, then for an $\epsilon>0$, the mixing time $t(\epsilon)$ is a parameter describing how long it takes for the Markov chain to get close to the stationary distribution, namely, 
\begin{equation*}
t_{mix}(\epsilon):=\min\{t: \max_{\tilde{\xb}} \|P^{t}(\tilde{\xb},\cdot)-\mathbf{\pi}\|_{TV}\},
\end{equation*}
where $\|\mu-\nu\|_{TV}$ is the usual total variation distance between two distributions $\mu$ and $\nu$ over the state space $\{+1,-1\}^{N}$. 
\begin{equation*}
\|\mu-\nu\|_{TV}=\frac{1}{2}\sum_{\yb \in \{+1,-1\}^{N}} |\mu(\yb)-\nu(\yb)|.
\end{equation*}

The mixing time is closely related to the spectrum of the transition matrix $P$. More precisely, for a reversible Markov chain, its mixing time is generally small when the gap between the largest and the second largest eigenvalue of $P$, namely $1-\lambda_2$, is large. The inverse of this gap $\frac{1}{1-\lambda_2}$ is called the relaxation time for this Markov chain. In the next few sections, we will discuss how the mixing time is related to specific system structures.

\section{Mixing Time without Local Minima}
\label{sec:mixing_time_ortho}
In this section, we consider the mixing time for MCMC for integer LS problems and study how the mixing time for integer LS problem depends on the linear matrix structure and $SNR$. As a first step, we consider a linear matrix $\Hb$ with orthogonal columns. As shown later, the mixing time for this matrix has an upper bound independent of $SNR$. In fact, this is a general phenomenon for integer LS problems without local minima.

For simplicity, we incorporate the $SNR$ term into $\Hb$, and the model we are currently considering is
\begin{equation}
\label{eq:orthogonal}
\yb=\Hb \xb +\vb,
\end{equation}
where the columns of $\Hb$ are orthogonal to each other. We will also incorporate the $SNR$ term into $\Hb$ this way in the following sections unless stated otherwise. 

\begin{theorem}
Independent of the temperature $\alpha$ and $SNR$, the mixing time of the Gibbs sampler for orthogonal-column integer least-square problems is upper bounded by $N \log(N)+\log(1/\epsilon)N$.
\end{theorem}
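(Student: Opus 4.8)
The plan is to exploit the fact that orthogonality of the columns of $\Hb$ turns the stationary distribution into a product measure over the $N$ coordinates, after which the Gibbs dynamics collapses to a coupon-collector process that I can control by a coupling argument. First I would write $\Hb = [\hb_1,\ldots,\hb_N]$ and expand the objective. Since $\hb_i^T\hb_j = 0$ for $i\neq j$ and $x_j^2 = 1$ on $\{\pm 1\}^N$, the quadratic term collapses to the constant $\sum_j \|\hb_j\|^2$, leaving
\begin{equation*}
\left\| \yb - \Hb\xb \right\|^2 = c - 2\sum_{j=1}^{N} x_j\, \yb^T\hb_j,
\end{equation*}
with $c$ independent of $\xb$. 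Hence the stationary law factorizes, $\pi(\xb) \propto \prod_{j} e^{\frac{1}{\alpha^2} x_j \yb^T \hb_j}$, so under $\pi$ the coordinates are independent $\pm 1$ variables whose bias depends only on $\yb^T\hb_j$.

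The second and crucial observation is that the Gibbs conditional \eqref{Eq:Prob_of_symbol_MCMC} inherits this decoupling: when index $j$ is selected, every factor involving a coordinate $k\neq j$ is common to the numerator and to each term of the denominator and therefore cancels, so the resampling probability of $x_j$ reduces to $e^{\frac{1}{\alpha^2}\omega\,\yb^T\hb_j}/\sum_{\tilde\omega\in\{\pm 1\}} e^{\frac{1}{\alpha^2}\tilde\omega\,\yb^T\hb_j}$ and does not see the remaining, frozen coordinates. In other words, a single update of coordinate $j$ draws it exactly from its stationary marginal, independent of the current state of the chain.

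This sets up the coupling. I would run two copies of the chain, one started from an arbitrary worst-case $\tilde\xb$ and one started from stationarity, driven by the same random index $j$ at each step and the same uniform random number used to set the new value of $x_j$. Because the conditional law of $x_j$ is identical for both chains, after any update the two chains agree on the just-touched coordinate, and they never disagree there again. Consequently the chains have coupled as soon as every index in $\{1,\ldots,N\}$ has been selected at least once, i.e. at the classical coupon-collector time. The probability that a fixed coordinate is never chosen within $t$ steps is $(1-1/N)^t \le e^{-t/N}$, so a union bound gives coupling failure probability at most $N e^{-t/N}$, and the coupling inequality yields $\max_{\tilde\xb}\|P^{t}(\tilde\xb,\cdot)-\pi\|_{TV} \le N e^{-t/N}$. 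This is at most $\epsilon$ once $t \ge N\log N + N\log(1/\epsilon)$, which is exactly the claimed bound; note that $\alpha$ and $\SNR$ enter only through the marginal biases and never through the coupling time, which is why the bound is independent of them.

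I expect the only genuine content to lie in the two structural observations — the product stationary law and the decoupled conditional — while the coupon-collector tail is routine. The point I would check most carefully is that the shared-uniform coupling truly forces agreement on an updated coordinate for \emph{every} pair of current states, that is, that the conditional distribution is genuinely independent of the frozen coordinates; this independence is precisely where orthogonality is essential, and it fails in the presence of local minima treated in the later sections.
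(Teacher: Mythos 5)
Your proposal is correct and follows essentially the same route as the paper: both rest on the observation that column orthogonality makes the Gibbs conditional for coordinate $j$ independent of the frozen coordinates, then apply the classical shared-randomness coupling so that the coupling time is the coupon-collector time, yielding the $N\log N + N\log(1/\epsilon)$ bound. The only difference is that you spell out the product form of $\pi$ and the union-bound tail $Ne^{-t/N}$ explicitly, whereas the paper states the state-independent update probability directly and cites the standard coupon-collector result.
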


This theorem is an extension of the mixing time for regular random walks on an $N$-dimensional hypercube \cite{Levin}. The only difference here is that the transition probability follows (\ref{Eq:Prob_of_symbol_MCMC}) and that the transition probability depends on $SNR$.

\begin{proof}
When the $k$-th index was selected to update in the Gibbs sampler, since the columns of $\Hb$ are orthogonal to each other, the probability of updating $\xb_{k}$ to $-1$ is $\frac{1}{1+e^{\frac{2\yb^{T} {\bf{h}}_{k}}{\alpha^2}}}$. We note that this probability is independent of the current state of Markov chain $\hat{\xb}$. So we can use the classical coupling idea to get an upper bound on the mixing time of this Markov Chain.

Consider two separate Markov chains starting at two different states $\xb_{1}$ and $\xb_{2}$. These two chains follow the same update rule according to (\ref{Eq:Prob_of_symbol_MCMC}) and, by using the random source, each step they select the same position index to update and they update that position to the same symbol. Let $\tau_{couple}$ be the first time the two chains come to the same state. Then by a classical result, the total variation distance
\begin{equation}
\label{Eq:TV_coupling}
d(t)=\max_{\tilde{\xb}} \|P^{t}(\tilde{\xb},\cdot)-\mathbf{\pi}\|_{TV} \leq \max_{\xb_{1},\xb_{2}} p_{\xb_1,\xb2} \{\tau_{couple}>t\}.
\end{equation}
Note that the coupling time is just time for collecting all of the positions where $\xb_1$ and $\xb_2$ differ, as in the coupon collector problem. From the coupon collector results, for any $\xb_1$ and $\xb_2$,
\begin{equation}
\label{Eq:TV_coupling2}
d(N \log(N)+cN)\leq  p_{\xb_1,\xb_2} \{\tau_{couple}>N \log(N)+cN \} \leq e^{-c}.
\end{equation}
So the conclusion follows.
\end{proof} 
\section{Mixing Time with local Minima}
\label{sec:mixing_time_local}
In this section, we consider the mixing time for integer LS problems which have local minima besides the global minimum point. 

\begin{definition}
A local minimum $\tilde{\xb}$ is a state such that $\tilde{\xb}$ is not a global minimizer for $\min_{\sb \in \{-1,+1\}^N}\|\yb-\Hb\sb\|^2$; and any of its neighbors which differ from $\tilde{\xb}$ in only one position index, denoted by $\tilde{\xb}'$, satisfies $\|\yb-\Hb\tilde{\xb}'\|^2>\|\yb-\Hb\tilde{\xb}\|^2$.
\end{definition}

We will use the following theorem about the spectral gap of Markov chain to evaluate the mixing time.
\begin{theorem}[Jerrum and Sinclair 1989 \cite{JerrumSinclair}, Lawler and Sokal (1988) \cite{Lawler}, \cite{Levin}] Let $\lambda_2$ be the second largest eigenvalue of a reversible transition matrix $P$, and let $\gamma=1-\lambda_2$. Then
\begin{equation*}
                \frac{\Phi_{*}^2}{2}\leq \gamma \leq 2\Phi_{*},
\end{equation*}
where $\Phi_{*}$ is the bottleneck ratio defined as
\begin{equation*}
\Phi_{*}=\min_{\pi(S) \leq \frac{1}{2}} \frac{Q(S,S^{c})}{\pi(S)}.
\end{equation*}
Here $S$ is any subset of the state spaces with stationary measure no bigger than $\frac{1}{2}$, $S^{c}$ is its complement set, and $Q(S,S^c)$ is the probability of moving from $S$ to $S^c$ in one step when starting with the stationary distribution.
\label{thm:gap_bottle}
\end{theorem}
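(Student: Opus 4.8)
This is a Cheeger-type inequality relating the spectral gap to the bottleneck ratio, and the plan is to work entirely through the variational (Rayleigh-quotient) characterization of $\gamma$. Since $P$ is reversible it is self-adjoint for the inner product $\langle f,g\rangle_{\pi}=\sum_{x}\pi(x)f(x)g(x)$, so its eigenvalues are real, the top eigenvalue $1$ corresponds to constants, and the second-largest eigenvalue yields
\[
\gamma=\min_{f:\,\mathrm{Var}_{\pi}(f)>0}\frac{\mathcal{D}(f)}{\mathrm{Var}_{\pi}(f)},\qquad \mathcal{D}(f)=\tfrac12\sum_{x,y}\bigl(f(x)-f(y)\bigr)^{2}Q(x,y),
\]
where $Q(x,y)=\pi(x)P(x,y)$ and the minimum is over non-constant test functions. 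The whole argument reduces to bounding this Rayleigh quotient above and below by suitable choices of $f$.

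The upper bound $\gamma\le 2\Phi_{*}$ is the easy half. First I would let $S$ attain $\Phi_{*}$, so $\pi(S)\le\tfrac12$, and insert the indicator $f=\mathbf{1}_{S}$ into the variational formula. A direct computation gives $\mathcal{D}(\mathbf{1}_{S})=Q(S,S^{c})$, while $\mathrm{Var}_{\pi}(\mathbf{1}_{S})=\pi(S)\bigl(1-\pi(S)\bigr)\ge\tfrac12\pi(S)$ because $\pi(S)\le\tfrac12$. Dividing yields $\gamma\le \mathcal{D}(\mathbf{1}_S)/\mathrm{Var}_\pi(\mathbf{1}_S)\le 2Q(S,S^{c})/\pi(S)=2\Phi_{*}$.

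The lower bound $\gamma\ge \Phi_{*}^{2}/2$ is the substantive half and the main obstacle. Here I would start from an eigenfunction $\varphi$ with $P\varphi=\lambda_{2}\varphi$, i.e. $(I-P)\varphi=\gamma\varphi$, and pass to its positive part $f=\varphi^{+}$, choosing the sign of $\varphi$ so that the support $\{f>0\}$ has $\pi$-measure at most $\tfrac12$. The first key step is a pointwise inequality: on $\{f>0\}$ one has $(I-P)f\le\gamma f$, since replacing $\varphi$ by $\varphi^{+}$ only increases the averaged term $\sum_y P(x,y)\varphi(y)$ being subtracted; pairing this with $f\pi\ge 0$ and summing gives $\mathcal{D}(f)\le\gamma\|f\|_{\pi}^{2}$. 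Thus it suffices to show $\mathcal{D}(f)/\|f\|_{\pi}^{2}\ge\Phi_{*}^{2}/2$ for this nonnegative $f$ supported on a small set.

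The second key step converts the Dirichlet form into a statement about cuts. I would introduce $A=\tfrac12\sum_{x,y}\bigl|f(x)^{2}-f(y)^{2}\bigr|Q(x,y)$ and bound it two ways. Factoring $|f(x)^2-f(y)^2|=|f(x)-f(y)|\,|f(x)+f(y)|$ and applying Cauchy--Schwarz together with $(f(x)+f(y))^2\le 2(f(x)^2+f(y)^2)$ and $\sum_y Q(x,y)=\pi(x)$ gives $A\le\sqrt{2}\,\mathcal{D}(f)^{1/2}\|f\|_{\pi}$. For the reverse bound I would use the co-area (layer-cake) identity $A=\int_{0}^{\infty}Q(S_{t},S_{t}^{c})\,dt$ with level sets $S_{t}=\{x:f(x)^{2}>t\}$; since each $S_{t}\subseteq\{f>0\}$ has $\pi(S_{t})\le\tfrac12$, the definition of $\Phi_{*}$ gives $Q(S_{t},S_{t}^{c})\ge\Phi_{*}\pi(S_{t})$ and hence $A\ge\Phi_{*}\int_0^\infty\pi(S_t)\,dt=\Phi_{*}\|f\|_{\pi}^{2}$. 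Chaining the two bounds yields $\Phi_{*}\|f\|_{\pi}^{2}\le\sqrt{2}\,\mathcal{D}(f)^{1/2}\|f\|_{\pi}$, i.e. $\mathcal{D}(f)/\|f\|_{\pi}^{2}\ge\Phi_{*}^{2}/2$, which with the previous step completes the proof. The delicate points I expect are the reduction to the positive part with the correct sign of $\varphi$, and verifying that the Cauchy--Schwarz/co-area chain is tight enough to lose only the factor $2$.
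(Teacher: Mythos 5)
The paper does not prove this theorem at all: it is quoted as a known result, attributed to Jerrum--Sinclair (1989), Lawler--Sokal (1988), and the Levin--Peres--Wilmer book, and is then used as a black box in the proof of the subsequent theorem on local minima. So there is no paper proof to compare against; what you have written is a self-contained reconstruction, and it is correct. It is, moreover, essentially the classical conductance argument from the cited sources: the easy direction by plugging the indicator $f=\mathbf{1}_{S}$ of an optimal bottleneck set into the Rayleigh quotient (using $\mathrm{Var}_\pi(\mathbf{1}_S)=\pi(S)\bigl(1-\pi(S)\bigr)\ge \tfrac12\pi(S)$), and the hard direction by passing to the positive part $f=\varphi^{+}$ of the second eigenfunction, with the sign chosen so that $\pi(\{\varphi>0\})\le\tfrac12$ (always possible, since $\{\varphi>0\}$ and $\{\varphi<0\}$ cannot both have measure exceeding $\tfrac12$), establishing the pointwise bound $(I-P)f\le\gamma f$ on the support, and then comparing the Dirichlet form with the cut functional $A=\tfrac12\sum_{x,y}\bigl|f(x)^2-f(y)^2\bigr|Q(x,y)$ via Cauchy--Schwarz in one direction and the co-area identity plus the definition of $\Phi_*$ in the other. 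Your constant bookkeeping checks out: $\mathcal{D}(\mathbf{1}_S)=Q(S,S^c)$ because reversibility makes the two ordered-pair contributions equal; the Cauchy--Schwarz step gives $A\le\tfrac12\sqrt{2\mathcal{D}(f)}\cdot\sqrt{4\|f\|_\pi^2}=\sqrt{2}\,\mathcal{D}(f)^{1/2}\|f\|_\pi$; and every level set $S_t$ with $t\ge 0$ sits inside $\{f>0\}$, so the conductance bound $Q(S_t,S_t^c)\ge\Phi_*\pi(S_t)$ legitimately applies. The only point worth stating explicitly is that $f=\varphi^{+}$ is not identically zero, which follows since $\varphi\perp\mathbf{1}$ in $\ell^2(\pi)$ forces $\varphi$ to change sign; with that remark added, your argument is a complete and correct proof of the quoted inequality.
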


\begin{theorem}
If there is a local minimum ${\tilde{\xb}}$ in an integer least-square problem and we denote its neighbor differing only at the $k$-th ($1\leq k \leq N$) location as ${\tilde{\xb}}_k$, then the mixing time of the Gibbs sampler is at least
\begin{equation}
t_{mix}(\epsilon) \geq \log(\frac{1}{2\epsilon})(\frac{1}{\gamma}-1),
\end{equation}
where
\begin{equation}
\gamma=\sum_{k=1}^{N} \frac{2}{N} {\frac{e^{-\frac{\|\yb-\Hb{\tilde{\xb}}_{k}\|^2}{2\alpha^2}}}{e^{-\frac{\|\yb-\Hb{\tilde{\xb}}_{k}\|^2}{2\alpha^2}}+e^{-\frac{\|\yb-\Hb{\tilde{\xb}}\|^2}{{2\alpha^2}}}} }
\end{equation}

The parameter $\gamma$ is upper bounded by
\begin{equation}
\frac{2}{1+e^{\frac{\min_{k}{\|\yb-\Hb{\tilde{\xb}}_{k}\|^2}-\|\yb-\Hb{\tilde{\xb}}\|^2}{2\alpha^2}}}
\end{equation}
\label{thm:gap_local}
\end{theorem}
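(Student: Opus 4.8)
The plan is to apply the bottleneck-ratio estimate of Theorem~\ref{thm:gap_bottle} to a single, carefully chosen cut, and then convert the resulting spectral-gap bound into a mixing-time lower bound via the standard relaxation-time inequality for reversible chains. First I would take the test set to be the singleton $S=\{\tilde{\xb}\}$ consisting only of the local minimum. To legitimately use $S$ in the definition of $\Phi_{*}$ I must verify $\pi(S)\le\frac{1}{2}$; this holds because $\tilde{\xb}$ is a local minimum but \emph{not} the global minimizer, so any global minimizer $\xb^{*}$ has strictly smaller residual and hence strictly larger stationary weight, forcing $2\pi(\tilde{\xb})<\pi(\tilde{\xb})+\pi(\xb^{*})\le1$, i.e.\ $\pi(\tilde{\xb})<\frac{1}{2}$. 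Irreducibility and aperiodicity (the Gibbs update has a positive self-loop probability) also hold, so the reversible-chain machinery applies.

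Next I would compute the bottleneck ratio for this cut. Since $S$ is a singleton, $Q(S,S^{c})=\pi(\tilde{\xb})\,P(\text{leave }\tilde{\xb})$, and because the sampler can leave $\tilde{\xb}$ only by flipping a single coordinate, the escape probability factorizes as $\sum_{k=1}^{N}\frac{1}{N}\,p_{k}$, where $p_{k}=e^{-\|\yb-\Hb\tilde{\xb}_{k}\|^2/(2\alpha^2)}/\left(e^{-\|\yb-\Hb\tilde{\xb}_{k}\|^2/(2\alpha^2)}+e^{-\|\yb-\Hb\tilde{\xb}\|^2/(2\alpha^2)}\right)$ is exactly the conditional probability in (\ref{Eq:Prob_of_symbol_MCMC}) of adopting the flipped symbol at position $k$ (probability $1/N$ of selecting that coordinate). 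Therefore $\Phi(S)=Q(S,S^{c})/\pi(S)=P(\text{leave }\tilde{\xb})=\sum_{k=1}^{N}\frac{1}{N}p_{k}=\gamma/2$, with $\gamma$ as defined in the statement (the factor $2/N$ there is precisely twice the $1/N$ escape weight).

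Feeding this into Theorem~\ref{thm:gap_bottle}, since $\Phi_{*}\le\Phi(S)=\gamma/2$ the spectral gap obeys $1-\lambda_{2}\le2\Phi_{*}\le\gamma$, so the relaxation time satisfies $1/(1-\lambda_{2})\ge1/\gamma$. I would then invoke the classical reversible-chain bound $t_{mix}(\epsilon)\ge\big(1/(1-\lambda_{2})-1\big)\log(1/(2\epsilon))$ and use the monotonicity $1/(1-\lambda_{2})-1\ge1/\gamma-1$ to conclude $t_{mix}(\epsilon)\ge(1/\gamma-1)\log(1/(2\epsilon))$, which is the claimed inequality. For the upper bound on $\gamma$, each summand equals $\frac{2}{N}\big(1+e^{(\|\yb-\Hb\tilde{\xb}_{k}\|^2-\|\yb-\Hb\tilde{\xb}\|^2)/(2\alpha^2)}\big)^{-1}$; since $\tilde{\xb}$ is a local minimum every energy gap $\|\yb-\Hb\tilde{\xb}_{k}\|^2-\|\yb-\Hb\tilde{\xb}\|^2$ is positive, so each term is largest when the gap is smallest, and replacing it by $\min_{k}$ and summing $N$ identical bounds gives $\gamma\le 2/\big(1+e^{(\min_{k}\|\yb-\Hb\tilde{\xb}_{k}\|^2-\|\yb-\Hb\tilde{\xb}\|^2)/(2\alpha^2)}\big)$.

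The main obstacle I anticipate is simply bookkeeping the factor of $2$: it enters only through the $\gamma\le2\Phi_{*}$ side of the bottleneck inequality and must be tracked consistently against the $2/N$ weighting in the definition of $\gamma$; the other delicate point is confirming that the admissibility condition $\pi(\{\tilde{\xb}\})\le\frac{1}{2}$ genuinely holds, so that the single-state cut is a legitimate choice in the minimization defining $\Phi_{*}$.
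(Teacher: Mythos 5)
Your proposal is correct and follows essentially the same route as the paper: the singleton cut $S=\{\tilde{\xb}\}$, the bottleneck-ratio bound $\gamma_{\mathrm{gap}}\le 2\Phi_{*}$ from Theorem~\ref{thm:gap_bottle}, and the standard relaxation-time inequality $t_{mix}(\epsilon)\ge\left(\frac{1}{\gamma_{\mathrm{gap}}}-1\right)\log\left(\frac{1}{2\epsilon}\right)$. You are in fact slightly more thorough than the paper, since you explicitly justify $\pi(\{\tilde{\xb}\})\le\frac{1}{2}$ via the global minimizer's larger stationary weight and you derive the final upper bound on $\gamma$ term by term, both of which the paper asserts without detail.
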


\begin{proof}
We apply Theorem \ref{thm:gap_bottle} to prove this result. We take a local minimum point ${\tilde{\xb}}$ as the single element in the bottle-neck set $S$. Since ${\tilde{\xb}}$ is a local minimum, $\pi(S) \leq \frac{1}{2}$.
\begin{equation}
Q(S,S^{c})=\frac{\pi(S)}{N}\sum_{k=1}^{N} {\frac{e^{-\frac{\|\yb-\Hb{\tilde{\xb}}_{k}\|^2}{2\alpha^2}}}{e^{-\frac{\|\yb-\Hb{\tilde{\xb}}_{k}\|^2}{2\alpha^2}}+e^{-\frac{\|\yb-\Hb{\tilde{\xb}}\|^2}{{2\alpha^2}}}} }
\end{equation}

Dividing by $\pi(S)$, by the definition of $\Phi_{*}$
\begin{equation}
\Phi_{*}\leq \frac{Q(S,S^{c})}{\pi(S)}=\frac{1}{N}\sum_{k=1}^{N} {\frac{e^{-\frac{\|\yb-\Hb{\tilde{\xb}}_{k}\|^2}{2\alpha^2}}}{e^{-\frac{\|\yb-\Hb{\tilde{\xb}}_{k}\|^2}{2\alpha^2}}+e^{-\frac{\|\yb-\Hb{\tilde{\xb}}\|^2}{{2\alpha^2}}}} }
\end{equation}

So we know $\gamma \leq 2 \frac{1}{N}\sum_{k=1}^{N} {\frac{e^{-\frac{\|\yb-\Hb{\tilde{\xb}}_{k}\|^2}{2\alpha^2}}}{e^{-\frac{\|\yb-\Hb{\tilde{\xb}}_{k}\|^2}{2\alpha^2}}+e^{-\frac{\|\yb-\Hb{\tilde{\xb}}\|^2}{{2\alpha^2}}}} }$.
From a well-known theorem for the relationship between $t_{mix}(\epsilon)$ and $\gamma$:
$t_{mix}(\epsilon) \geq (\frac{1}{\gamma}-1) \log(\frac{1}{2\epsilon})$ \cite{Levin},
our conclusion follows.
\end{proof}

\begin{theorem}
For an integer least-square problem where no two vectors give the same objective distance, the relaxation time (the inverse of the spectral gap) of MCMC is upper bounded by a constant as the temperature $\alpha \rightarrow 0$ if and only if there is no local minimum. Moreover, when there is a local minimum, as $\alpha \rightarrow 0$, the mixing time of Markov chain $t_{mix}(\epsilon) =e^{\Omega(\frac{1}{2\alpha^2})}$. \footnote{In this paper, $\Omega(\cdot)$,  $\Theta(\cdot)$, and $O(\cdot)$ are the usual scaling notations as in computer science}
\label{thm:mixing_scale_alpha}
\end{theorem}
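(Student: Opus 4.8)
The plan is to translate the statement about the relaxation time $\frac{1}{\gamma}$ into a statement about the bottleneck ratio $\Phi_{*}$ via Theorem~\ref{thm:gap_bottle}, which sandwiches $\gamma$ between $\frac{\Phi_{*}^2}{2}$ and $2\Phi_{*}$. Because no two vectors share an objective value there is a unique global minimizer $\xb^{*}$, and since $\pi(\xb)\propto e^{-\|\yb-\Hb\xb\|^2/(2\alpha^2)}$, the stationary distribution concentrates on $\xb^{*}$ as $\alpha\to 0$, i.e.\ $\pi(\xb^{*})\to 1$. I would prove the two directions of the equivalence separately and then read off the quantitative rate.

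The direction ``local minimum $\Rightarrow$ unbounded relaxation time'' is immediate from Theorem~\ref{thm:gap_local}. If a local minimum $\tilde{\xb}$ exists, then by the definition of a local minimum $\delta := \min_{k}\|\yb-\Hb\tilde{\xb}_{k}\|^2-\|\yb-\Hb\tilde{\xb}\|^2$ is a strictly positive constant, and the upper bound there gives $\gamma\le \frac{2}{1+e^{\delta/(2\alpha^2)}}$, so $\frac{1}{\gamma}\ge\frac{1}{2}e^{\delta/(2\alpha^2)}\to\infty$. This is the contrapositive of ``bounded relaxation time $\Rightarrow$ no local minimum.'' The ``moreover'' claim follows from the same bound together with $t_{mix}(\epsilon)\ge\log(\frac{1}{2\epsilon})(\frac{1}{\gamma}-1)$: since $\frac{1}{\gamma}-1\ge\frac{1}{2}e^{\delta/(2\alpha^2)}-1=e^{\Omega(1/(2\alpha^2))}$ with $\delta$ constant, we obtain $t_{mix}(\epsilon)=e^{\Omega(1/(2\alpha^2))}$.

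The substantive direction is ``no local minimum $\Rightarrow$ bounded relaxation time,'' where the main work is to lower bound $\Phi_{*}$ by a constant independent of $\alpha$ as $\alpha\to 0$; then $\gamma\ge\frac{\Phi_{*}^2}{2}$ bounds the relaxation time. Fix any candidate bottleneck set $S$ with $\pi(S)\le\frac{1}{2}$. For $\alpha$ small enough $\pi(\xb^{*})>\frac{1}{2}$, so every such $S$ excludes $\xb^{*}$. Let $\sb$ be the state of $S$ with the smallest objective value (hence the largest $\pi$). Since $\sb\ne\xb^{*}$ and there is no local minimum, $\sb$ has a neighbor $\sb'$ with strictly smaller objective; if $\sb'\in S$ it would have larger $\pi$ than $\sb$, contradicting the choice of $\sb$, so $\sb'\in S^{c}$. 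The flow across this single edge already gives $Q(S,S^{c})\ge\pi(\sb)P(\sb,\sb')$, where $P(\sb,\sb')$ equals $\frac{1}{N}$ times the conditional probability of flipping the differing coordinate into $\sb'$; this conditional probability exceeds $\frac{1}{2}$ because $\sb'$ is downhill from $\sb$, so $P(\sb,\sb')>\frac{1}{2N}$ for all $\alpha$ and $P(\sb,\sb')\to\frac{1}{N}$. Dividing by $\pi(S)$ yields $\frac{Q(S,S^{c})}{\pi(S)}\ge\frac{\pi(\sb)}{\pi(S)}P(\sb,\sb')$.

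It remains to control $\frac{\pi(\sb)}{\pi(S)}$, and here distinctness of objective values is essential: every other $\zb\in S$ has strictly larger objective than $\sb$, so $\frac{\pi(\zb)}{\pi(\sb)}=e^{-(\|\yb-\Hb\zb\|^2-\|\yb-\Hb\sb\|^2)/(2\alpha^2)}\to 0$, whence $\frac{\pi(\sb)}{\pi(S)}\to 1$ as $\alpha\to 0$. For each fixed $S$ this gives $\liminf_{\alpha\to 0}\frac{Q(S,S^{c})}{\pi(S)}\ge\frac{1}{N}$. Since there are only finitely many subsets $S$ not containing $\xb^{*}$, I take a common threshold $\alpha_{0}$ below which every such ratio exceeds $\frac{1}{2N}$, so $\Phi_{*}\ge\frac{1}{2N}$ and therefore $\frac{1}{\gamma}\le\frac{2}{\Phi_{*}^{2}}\le 8N^{2}$, a constant in $\alpha$. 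The main obstacle is exactly this last, uniform, lower bound on $\Phi_{*}$: it is not enough that each individual set has good conductance in the limit; one must use finiteness of the state space, plus distinctness of objectives to make the minimizer of $S$ dominate $\pi(S)$, to make the bound hold simultaneously over all bottleneck sets.
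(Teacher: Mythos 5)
Your proof is correct and follows essentially the same route as the paper: the local-minimum direction invokes the spectral-gap upper bound of Theorem~\ref{thm:gap_local} together with $t_{mix}(\epsilon)\geq\log(\frac{1}{2\epsilon})(\frac{1}{\gamma}-1)$, and the converse direction lower-bounds the bottleneck ratio of Theorem~\ref{thm:gap_bottle} by taking the minimizer of each candidate set $S$ and routing flow to its strictly downhill neighbor, which must lie in $S^{c}$. Your explicit handling of uniformity over the finitely many candidate sets $S$ (yielding $\Phi_{*}\geq\frac{1}{2N}$ and a relaxation time at most $8N^{2}$) tightens a step the paper's proof passes over implicitly, but it is a refinement of the same argument rather than a different approach.
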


\begin{proof}
First, when there is a local minimum, from Theorem \ref{thm:gap_local} and Theorem \ref{thm:gap_bottle}, the spectral gap $\gamma$ is lower bounded by
\begin{equation}
\gamma=\frac{2}{N}\sum_{k=1}^{N} {\frac{e^{-\frac{\|\yb-\Hb{\tilde{\xb}}_{k}\|^2}{2\alpha^2}}}{e^{-\frac{\|\yb-\Hb{\tilde{\xb}}_{k}\|^2}{2\alpha^2}}+e^{-\frac{\|\yb-\Hb{\tilde{\xb}}\|^2}{{2\alpha^2}}}} }
\end{equation}

As the temperature $\alpha \rightarrow 0$,  the spectral gap upper bound
\begin{equation}
\frac{2}{1+e^{\frac{\min_{k}{\|\yb-\Hb{\tilde{\xb}}_{k}\|^2}-\|\yb-\Hb{\tilde{\xb}}\|^2}{2\alpha^2}}}
\end{equation}
decreases at the speed of $\Theta(e^{-\frac{\min_{k}{\|\yb-\Hb{\tilde{\xb}}_{k}\|^2}-\|\yb-\Hb{\tilde{\xb}}\|^2}{2\alpha^2}})$. So the relaxation time of the MCMC is lower bounded by $t_{mix}(\epsilon) =e^{\Omega(\frac{1}{2\alpha^2})}$, which grows unbounded as $\alpha \rightarrow 0$.

Suppose instead that there is no local minimum. We argue that as $\alpha \rightarrow 0$, the spectral gap of this MCMC is lower bounded by some constant independent of $\alpha$. Again, we look at the bottle neck ratio and use Theorem \ref{thm:gap_bottle} to bound the spectral gap.

Consider any set  $S$ of sequences which do not include the global minimum point ${\xb^*}$. As $\alpha \rightarrow 0$, the measure of this set of sequences $\pi(S)\leq \frac{1}{2}$. Moreover, as $\alpha \rightarrow 0$, any set $S$ with $\pi(S)\leq \frac{1}{2}$ can not contain the global minimum point ${\xb^*}$. Now we look at the sequence ${\tilde{\xb}}'$ which has the smallest distance $\|\yb-\Hb{\tilde{\xb}}'\|$ among the set $S$. Since there is no local minimum, ${\tilde{\xb}}'$ must have at least one neighbor ${\tilde{\xb}}''$ in $S^{c}$ which has smaller distance than ${\tilde{\xb}}'$. Otherwise, this would imply ${\tilde{\xb}}'$ is a local minimum.  So
\begin{equation}
Q(S,S^{c}) \geq \pi({\tilde{\xb}}') \times \frac{1}{N} {\frac{e^{-\frac{\|\yb-\Hb{\tilde{\xb}}''\|^2}{2\alpha^2}}}{e^{-\frac{\|\yb-\Hb{\tilde{\xb}}''\|^2}{2\alpha^2}}+e^{-\frac{\|\yb-\Hb{\tilde{\xb}}'\|^2}{{2\alpha^2}}}} }
\end{equation}

As $\alpha \rightarrow 0$, $\frac{\pi({\tilde{\xb}}')}{\pi(S)} \rightarrow 1$. So for a given $\epsilon>0$, as $\alpha \rightarrow 0$
\begin{equation}
   \frac{Q(S,S^{c})}{\pi(S)} \geq \frac{1-\epsilon}{N} {\frac{e^{-\frac{\|\yb-\Hb{\tilde{\xb}}''\|^2}{2\alpha^2}}}{e^{-\frac{\|\yb-\Hb{\tilde{\xb}}''\|^2}{2\alpha^2}}+e^{-\frac{\|\yb-\Hb{\tilde{\xb}}'\|^2}{{2\alpha^2}}}} },
\end{equation}
which approaches $\frac{(1-\epsilon)}{N}$ as $\alpha \rightarrow 0$ because $\|\yb-\Hb{\tilde{\xb}}''\|^2 < \|\yb-\Hb{\tilde{\xb}}'\|^2$.

From Theorem \ref{thm:gap_bottle}, the spectral gap $\gamma$ is at least $\frac{(\frac{Q(S,S^{c})}{\pi(S)})^2}{2}$, which is lower bounded by a constant as $\alpha \rightarrow 0$.
\end{proof}

So from the analysis above, the mixing time is closely related to whether there are local minima in the problem. In the next section, we will see there often exist local minima, which implies very slow convergence rate for MCMC when the temperature is kept at the noise level in the high SNR regime. 
\section{The Presence of Local Minima}
\label{sec:local_minimum}
In this section, we look at the problem of how many local minima there are in an integer least-square problem, especially when the $SNR$ is high.

\begin{theorem}
There can be exponentially many local minima in an integer least-quare problem.
\end{theorem}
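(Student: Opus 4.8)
The plan is to prove the statement by explicit construction: exhibit a family of instances, indexed by $N$, whose number of local minima grows like $2^{\Theta(N)}$. The key structural observation is that the objective $\|\yb-\Hb\sb\|^2$ decouples across blocks whenever $\Hb$ is block diagonal, and that for such a decoupled objective a point is a local minimum (in the sense of the Definition) exactly when each of its blocks is locally optimal within its own sub-problem while at least one block fails to be globally optimal. I will therefore (i) build a small ``gadget'' problem of fixed dimension possessing a genuine local minimum besides its global minimum, and (ii) stack many independent copies of this gadget along the diagonal, so that the locally optimal configurations multiply.

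First I would construct the gadget in dimension $2$. Writing $\mathbf{a}=\Hb^T\yb$ and $g=\hb_1^T\hb_2$, one checks that on $\{-1,+1\}^2$ the objective equals a constant $C$ minus $2\mathbf{a}^T\sb$ plus $2g\,s_1 s_2$, so the four values are completely controlled by $a_1,a_2$ and $g$. Choosing, for instance, $a_1=a_2=-1$ and $g=-2$ gives objective values $C-8$ at $(-1,-1)$, value $C$ at $(+1,+1)$, and value $C+4$ at each of $(+1,-1)$ and $(-1,+1)$; thus $(-1,-1)$ is the global minimum while $(+1,+1)$ is a \emph{strict} local minimum, since both of its neighbours are strictly worse. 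Such values are realizable by a genuine channel/observation pair: it suffices to pick columns $\hb_1,\hb_2$ whose Gram matrix has off-diagonal $-2$ (with norms large enough to keep it positive semidefinite, e.g. $\|\hb_i\|^2=3$) and then solve the full-rank system $\Hb^T\yb=(-1,-1)$ for $\yb$. This gadget has exactly two strict local optima, one of which is a genuine local minimum.

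Next I would form $\Hb=\mathrm{diag}(\Hb^{(1)},\dots,\Hb^{(m)})$ and $\yb=(\yb^{(1)},\dots,\yb^{(m)})$ from $m=\lfloor N/2\rfloor$ copies of the gadget (padding with one trivial coordinate when $N$ is odd), so that $\|\yb-\Hb\sb\|^2=\sum_{i=1}^{m}\|\yb^{(i)}-\Hb^{(i)}\sb^{(i)}\|^2$. Since a single coordinate flip touches only one block, a configuration $\sb=(\sb^{(1)},\dots,\sb^{(m)})$ has all neighbours strictly worse if and only if each $\sb^{(i)}$ is a strict local optimum of its gadget; there are therefore exactly $2^m$ such configurations, of which the unique all-global one (every block at its own $(-1,-1)$) is the global minimum. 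The remaining $2^m-1=2^{\lfloor N/2\rfloor}-1$ configurations are, by the Definition, local minima, which is exponential in $N$.

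The main thing to get right---rather than a genuine obstacle---is the equivalence ``strictly locally optimal in the product iff strictly locally optimal in every block,'' together with the strictness of all the relevant inequalities, since the Definition demands that every neighbour be strictly larger. Strictness is inherited directly from the gadget, where the local minimum lies strictly below both of its neighbours, and a flip in the product changes exactly one block's term by that same strict amount. Moreover, because these are strict (hence open) inequalities, one may perturb the gadget slightly to render all $2^N$ objective values distinct while preserving every strict local optimum, so that the construction simultaneously satisfies the no-ties genericity assumption used elsewhere in the paper and retains all $2^{\lfloor N/2\rfloor}-1$ local minima.
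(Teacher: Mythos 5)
Your proof is correct and takes essentially the same approach as the paper: the paper likewise builds a two-coordinate gadget with a spurious strict local optimum (pairing each unit column $\hb_i$ with a parallel column $-(1+\epsilon)\hb_i$, pairs mutually orthogonal, and $\yb=\Hb(-\mathbf{1})$), so that the objective decouples across pairs and the per-pair local optima multiply into $2^{N/2}-1$ local minima. The differences are only cosmetic: the paper's blocks are rank-one parallel column pairs (per-pair values $0$, $4\epsilon^2$, $4$, $4(1+\epsilon)^2$) interleaved at indices $i$ and $i+N/2$, whereas yours are full-rank $2\times 2$ blocks specified via the Gram matrix and $\Hb^T\yb$, plus an extra tie-breaking perturbation the paper omits.
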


\begin{proof}
Let $N$ be an even integer. Consider a matrix whose first $\frac{N}{2}$ columns $\hb_{i}$, $1\leq i \leq \frac{N}{2}$ have unit norms and are orthogonal to each other. For the other  $\frac{N}{2}$ columns $\hb_{i}$, $\frac{N}{2} +1\leq i \leq N$, $\hb_{i}=-(1+\epsilon)\hb_{i-\frac{n}{2}}$, where $\epsilon$ is a sufficiently small positive number ($\epsilon<1$). We also let $\yb=\Hb (-\mathbf{{1}})$, where $\mathbf{{1}}$ is an all-$1$ vector. So $-\mathbf{{1}}$ is a globally minimum point for this integer LS problem.

Consider all those vectors ${\tilde{\xb}}'$ which, for any $1\leq i\leq \frac{N}{2}$, its $i$-th element and  $i+\frac{N}{2}$-th element are either simultaneously $+1$ or simultaneously $-1$. When $\epsilon$ is smaller than $1$, we claim that any such a vector except the all $-1$ vector ${\tilde{\xb}}$, is a local minimum, which shows that there are at least $2^{\frac{N}{2}}-1$ local minima.

 Assume that for a certain $1\leq i\leq \frac{N}{2}$, the $i$-th element and  $(i+\frac{N}{2})$-th element of ${\tilde{\xb}}'$ are simultaneously $-1$. Then if we change the $i$-th element to $+1$, $\|\yb-\Hb{\tilde{\xb}}'\|^2$ increases by $4$; and if we change the $(i+\frac{N}{2})$-th element to $+1$, $\|\yb-\Hb{\tilde{\xb}}'\|^2$ increases by $4(1+\epsilon)^2$. This is true because the $i$-th and $(i+\frac{N}{2})$-th columns are orthogonal to other $(N-2)$ columns.

Similarly, assume that for a certain $1\leq i\leq \frac{N}{2}$, the $i$-th element and  $(i+\frac{N}{2})$-th element of ${\tilde{\xb}}'$ are simultaneously $+1$. Then if we change the $i$-th element to $-1$, $\|\yb-\Hb{\tilde{\xb}}'\|^2$ increases by $4(1+\epsilon)^2-4\epsilon^2$; and if we change the $(i+\frac{N}{2})$-th element to $-1$, $\|\yb-\Hb{\tilde{\xb}}'\|^2$ increases by $4-4\epsilon^2$.
\end{proof}

Now we study how often we encounter a local minimum in the specific inter least-square problem model. Without loss of generality, we assume that the transmitted sequence is an all $-1$ sequence. We first give the condition for ${\tilde{\xb}}$ to be a local minimum. We assume that ${\tilde{\xb}}$ is a vector which has $k$ `$+1$' over an index set $K$ with $|K|=k$ and $(N-k)$ `$-1$' over the set $\overline{K}=\{1,2,...,N\}\setminus {K}$.
\begin{lemma}
${\tilde{\xb}}$ is a local minimum if and only if ${\tilde{\xb}}$ is not a global minimum; and
\begin{itemize}
\item  $\forall i \in K$,
\begin{eqnarray}
\hb_{i}^{T}(\sum_{j \in K}\hb_{j}-\frac{\vb}{2})<\frac{\|\hb_{i}\|^2}{2}
\end{eqnarray}
\item  $\forall i \in \overline{K}$,
\begin{eqnarray}
\hb_{i}^{T}(\sum_{j \in K}\hb_{j}-\frac{\vb}{2})>-\frac{\|\hb_{i}\|^2}{2}.
\end{eqnarray}

\end{itemize}

\label{lemma:localcondition}
\end{lemma}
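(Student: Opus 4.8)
The plan is to reduce local-minimality to a coordinate-by-coordinate comparison of the objective $f(\sb) := \|\yb - \Hb\sb\|^2$ and then read off the two stated inequalities by direct algebra. First I would record the structural facts special to this setup: since the $SNR$ has been absorbed into $\Hb$ and the transmitted vector is $-\oneb$, we have $\yb = -\Hb\oneb + \vb$; and since $\tilde{\xb}$ agrees with $-\oneb$ except that it equals $+1$ exactly on $K$, we can write $\tilde{\xb} = -\oneb + 2\sum_{j\in K}\mathbf{e}_j$. Combining these yields the key identity for the residual at $\tilde{\xb}$, namely $\yb - \Hb\tilde{\xb} = \vb - 2\sum_{j\in K}\hb_j$, which is precisely the quantity that will appear (contracted with $\hb_i$) in every inequality.

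Next I would compute, for a single-coordinate flip, the exact change in the objective. Writing a neighbor as $\tilde{\xb} + \delta$ with $\delta = \pm 2\mathbf{e}_i$, the quadratic expansion gives $f(\tilde{\xb}+\delta) - f(\tilde{\xb}) = -2(\yb - \Hb\tilde{\xb})^T\Hb\delta + \|\Hb\delta\|^2$, where $\Hb\delta = \pm 2\hb_i$ and $\|\Hb\delta\|^2 = 4\|\hb_i\|^2$. I then split into the two cases dictated by the definition of $\tilde{\xb}$: for $i\in K$ the flip sends $+1\mapsto -1$ (so $\delta = -2\mathbf{e}_i$) and the increment equals $4(\yb - \Hb\tilde{\xb})^T\hb_i + 4\|\hb_i\|^2$; for $i\in\overline{K}$ the flip sends $-1\mapsto +1$ (so $\delta = +2\mathbf{e}_i$) and the increment equals $-4(\yb - \Hb\tilde{\xb})^T\hb_i + 4\|\hb_i\|^2$.

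Finally I would impose the defining requirement of a local minimum, namely that every neighbor has strictly larger objective, i.e. that each increment above is $>0$, and substitute the residual identity $\yb - \Hb\tilde{\xb} = \vb - 2\sum_{j\in K}\hb_j$. Requiring the $i\in K$ increment to be positive is equivalent, after dividing by $2$ and moving the $\vb$ term across, to $\hb_i^T(\sum_{j\in K}\hb_j - \tfrac{\vb}{2}) < \tfrac{\|\hb_i\|^2}{2}$; requiring the $i\in\overline{K}$ increment to be positive is equivalent to $\hb_i^T(\sum_{j\in K}\hb_j - \tfrac{\vb}{2}) > -\tfrac{\|\hb_i\|^2}{2}$. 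Conjoined with the hypothesis that $\tilde{\xb}$ is not the global minimizer, this gives both directions of the ``if and only if,'' since each increment being positive is logically equivalent to the corresponding inequality. The computation is otherwise routine, so the only place needing care is the bookkeeping of signs: the two cases flip the sign of the linear term, and clearing the factor $-2$ when isolating $\hb_i^T(\cdot)$ reverses the inequality, which is exactly what produces the opposite inequality directions ($<$ on $K$ versus $>$ on $\overline{K}$) in the statement.
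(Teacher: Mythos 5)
Your proposal is correct and follows essentially the same route as the paper: both expand the quadratic objective under a single-coordinate flip, use the residual identity $\yb - \Hb\tilde{\xb} = \vb - 2\sum_{j\in K}\hb_{j}$, and rearrange the positivity of each increment into the two stated inequalities (the paper writes the difference as $\|\yb-\Hb\tilde{\xb}\|^2-\|\yb-\Hb\tilde{\xb}_{\sim i}\|^2<0$ rather than your $f(\tilde{\xb}+\delta)-f(\tilde{\xb})>0$, which is the same computation). Your sign bookkeeping and the resulting inequality directions match the paper's exactly, so there is nothing to correct.
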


\begin{proof}
For a position $i \in K$, when we flip ${\tilde{\xb}}_{i}$ to $1$,  $\|\yb-\Hb{\tilde{\xb}}'\|^2$ is increased, namely,
\begin{eqnarray}
&&\|\yb-\Hb{\tilde{\xb}}\|^2-\|\yb-\Hb{\tilde{\xb}}_{\sim i}\|^2 \nonumber \\
&=& \|-2\sum_{j \in K}\hb_{j}+\vb\|^2-\|-2\sum_{j \in K, j \neq i}\hb_{j}+\vb\|^2 \nonumber \\
&=& 4\|\hb_{i}\|^2+4\hb_{i}^{T}(2\sum_{j \in K, j \neq i}\hb_{j}-\vb)  \nonumber\\
&<&0,
\end{eqnarray}
where $\tilde{\xb}_{\sim i}$ is a neighbor of $\tilde{\xb}$ by changing index $i$.
This means
\begin{eqnarray}
\hb_{i}^{T}(\sum_{j \in K}\hb_{j}-\frac{\vb}{2})<\frac{\|\hb_{i}\|^2}{2}.
\end{eqnarray}

For a position $i \in \overline{K}$, when we flip ${\tilde{\xb}}_{i}$ to $-1$,  $\|\yb-\Hb{\tilde{\xb}}'\|^2$ is also increased, namely,
\begin{eqnarray}
&&\|\yb-\Hb{\tilde{\xb}}\|^2-\|\yb-\Hb{\tilde{\xb}}_{\sim i}\|^2 \nonumber \\
&=& \|-2\sum_{j \in K}\hb_{j}+\vb\|^2-\|-2\sum_{j \in K}\hb_{j}-2\hb_{i}+\vb\|^2 \nonumber \\
&=& -4\|\hb_{i}\|^2+4\hb_{i}^{T}(-2\sum_{j \in K}\hb_{j}+\vb)  \nonumber\\
&<&0.
\end{eqnarray}
This means
\begin{eqnarray}
(\hb_{i})^{T}(\sum_{j \in K}\hb_{j}-\frac{\vb}{2})>-\frac{\|\hb_{i}\|^2}{2}.
\end{eqnarray}
\end{proof}

%
%

It is not hard to see that when $SNR \rightarrow \infty$, $\vb$ is comparatively small with high probability, so we have the following lemma.
\begin{lemma}
When $SNR \rightarrow \infty$, with high probability, ${\tilde{\xb}}$ is a local minimum if and only if ${\tilde{\xb}} \neq -\mathbf{1}$; and
\begin{itemize}
\item  $\forall i \in K$,
\begin{eqnarray}
\hb_{i}^{T}(\sum_{j \in K}\hb_{j})<\frac{\|\hb_{i}\|^2}{2}
\end{eqnarray}
\item  $\forall i \in \overline{K}$,
\begin{eqnarray}
\hb_{i}^{T}(\sum_{j \in K}\hb_{j})>-\frac{\|\hb_{i}\|^2}{2}.
\end{eqnarray}

\end{itemize}

\end{lemma}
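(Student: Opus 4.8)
The plan is to derive this high-$\SNR$ statement directly from the exact characterization of Lemma~\ref{lemma:localcondition} by showing that the noise-dependent correction $\tfrac{1}{2}\hb_i^T\vb$ is asymptotically negligible against the remaining, purely geometric, terms. After folding the $\SNR$ factor into $\Hb$, each column may be written $\hb_i=\sqrt{\SNR/N}\,\mathbf{g}_i$, where $\mathbf{g}_i$ has i.i.d. $\Nc(0,1)$ entries and does not depend on $\SNR$. First I would record the two competing scales: the geometric quantities $\hb_i^T(\sum_{j\in K}\hb_j)$ and $\tfrac{\|\hb_i\|^2}{2}$ are both $\Theta(\SNR)$, whereas $\tfrac{1}{2}\hb_i^T\vb=\tfrac{1}{2}\sqrt{\SNR/N}\,\mathbf{g}_i^T\vb$ has magnitude only $\Theta(\sqrt{\SNR})$, because $\mathbf{g}_i^T\vb$ is a zero-mean random variable of variance $N$ that is independent of $\SNR$. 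Thus the correction is smaller than the geometric terms by a factor $\Theta(1/\sqrt{\SNR})$, and dropping it should not flip the sign of the inequalities except on a vanishing-probability event.

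To make this precise I would, for each support set $K$ and each index $i\in K$, introduce the noiseless margin $a_{K,i}=\hb_i^T(\sum_{j\in K}\hb_j)-\tfrac{\|\hb_i\|^2}{2}$, so that the exact condition of Lemma~\ref{lemma:localcondition} reads $a_{K,i}<\tfrac{1}{2}\hb_i^T\vb$ while the claimed condition reads $a_{K,i}<0$ (the case $i\in\overline{K}$ is identical after the obvious sign change). Writing $a_{K,i}=(\SNR/N)\,\tilde a_{K,i}$, the unscaled margin $\tilde a_{K,i}$ is a nonconstant polynomial in the Gaussian entries of $\Hb$ and therefore has a continuous distribution, so $\tilde a_{K,i}\neq 0$ almost surely. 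Since $N$ is fixed there are only finitely many pairs $(K,i)$; hence for any $\epsilon>0$ I can pick $\delta>0$ so that, with probability at least $1-\epsilon/2$ over $\Hb$, every $|\tilde a_{K,i}|\geq\delta$, i.e.\ $|a_{K,i}|\geq\delta\,\SNR/N$.

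On this gap event the sign of $a_{K,i}-\tfrac{1}{2}\hb_i^T\vb$ equals the sign of $a_{K,i}$ as soon as the noise term is dominated by the gap, and $|\tfrac{1}{2}\hb_i^T\vb|<\delta\,\SNR/N$ is equivalent to $|\mathbf{g}_i^T\vb|<2\delta\sqrt{\SNR/N}$, whose right-hand side grows without bound while the left-hand side keeps a fixed, $\SNR$-independent distribution. Hence for $\SNR$ large enough this holds simultaneously for all $(K,i)$ with probability at least $1-\epsilon/2$, and a union bound gives that, with probability at least $1-\epsilon$, the exact inequalities hold if and only if their noiseless counterparts do. Finally, to replace the clause ``$\tilde{\xb}$ is not a global minimum'' by ``$\tilde{\xb}\neq-\mathbf{1}$'', I would invoke the standard fact that the ML detector recovers the transmitted vector with probability tending to one as $\SNR\to\infty$, so that $-\mathbf{1}$ is the unique global minimizer with high probability; on that event the two clauses coincide. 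Intersecting the three high-probability events yields the lemma.

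The main obstacle is precisely the behaviour near the hyperplanes $a_{K,i}=0$: the scale-separation heuristic fails when $a_{K,i}$ is comparable to $\sqrt{\SNR}$, so the delicate step is the quantified (finite-$N$) continuity/measure-zero argument that produces a uniform gap $\delta$, together with the coupling of that gap to the admissible noise magnitude $\sqrt{\SNR/N}$. Everything else---the scaling of the geometric and noise terms and the final union bound---is routine bookkeeping.
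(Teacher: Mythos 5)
Your proposal is correct and takes essentially the same approach the paper intends: the paper's entire justification is the one-line remark that $\vb$ is ``comparatively small with high probability'' at high $SNR$, and your argument is the rigorous version of exactly that scale-separation idea (geometric terms $\Theta(SNR)$ versus noise cross-terms $\Theta(\sqrt{SNR})$), applied to the characterization in Lemma~\ref{lemma:localcondition}. Your uniform-gap argument over the finitely many pairs $(K,i)$ and the ML-consistency step equating ``not a global minimum'' with $\tilde{\xb}\neq-\mathbf{1}$ fill in details the paper leaves implicit, but they do not constitute a different route.
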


\begin{theorem}
Consider a $2 \times 2$ matrix $\Hb$ whose two columns are uniform randomly sampled from the unit-normed $2$-dimensional vector.
When $\vb=0$, the probability of there existing a local minimum for such an $\Hb$ is $\frac{1}{3}$.
\end{theorem}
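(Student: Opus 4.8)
The plan is to exploit the tiny state space. With $N=2$ there are only four binary vectors, and since $\vb=0$ the global minimizer is exactly $\tilde{\xb}=-\oneb=(-1,-1)$, achieving objective value $0$. Any local minimum must therefore be one of the three remaining states $(+1,+1)$, $(+1,-1)$, $(-1,+1)$, and I would test each one against Lemma \ref{lemma:localcondition} specialized to $\vb=0$ and $\|\hb_1\|=\|\hb_2\|=1$. After this specialization the only free quantity is the inner product $c:=\hb_1^T\hb_2$, so the whole problem reduces to identifying the range of $c$ for which a local minimum exists.

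First I would dispose of the two single-flip neighbors of the global optimum. For $(+1,-1)$ we have $K=\{1\}$, and the condition at $i=1\in K$ reads $\hb_1^T\hb_1<\tfrac12$, i.e. $1<\tfrac12$, which is impossible; by symmetry $(-1,+1)$ fails as well. This merely reflects the general fact that a neighbor of the global optimum can never be a local minimum, since flipping the single differing coordinate returns to $-\oneb$ and strictly lowers the objective.

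Next I would analyze the remaining candidate $(+1,+1)$, where $K=\{1,2\}$ and $\overline{K}=\emptyset$. Here both conditions, at $i=1$ and at $i=2$, reduce to $1+c<\tfrac12$, that is $c<-\tfrac12$. (The ``not a global minimizer'' clause is automatic off the measure-zero event $c=-1$, so it can be ignored for the probability computation.) Hence, for a given $\Hb$, a local minimum exists if and only if $\hb_1^T\hb_2<-\tfrac12$.

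Finally I would convert this into a probability. Writing the included angle between the two independent uniform unit vectors as $\theta\in[0,\pi]$ and $c=\cos\theta$, the event $c<-\tfrac12$ is exactly $\theta\in(\tfrac{2\pi}{3},\pi]$, whose measure under the uniform law on $[0,\pi]$ is $\tfrac{\pi-2\pi/3}{\pi}=\tfrac13$. The one step warranting care is this last one: justifying that the included angle is uniform on $[0,\pi]$ (equivalently, that the relative phase $\hb_1,\hb_2$ is uniform on $[0,2\pi)$, so $\cos\psi<-\tfrac12$ on an arc of length $\tfrac{2\pi}{3}$ out of $2\pi$) and evaluating $P(c<-\tfrac12)$ correctly. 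Once that is in place, the claimed value $\tfrac13$ follows immediately.
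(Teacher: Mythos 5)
Your proposal is correct and follows essentially the same route as the paper: eliminate $(+1,-1)$ and $(-1,+1)$ as neighbors of the global optimum, reduce the candidate $(+1,+1)$ via Lemma \ref{lemma:localcondition} to the condition $\hb_1^T\hb_2<-\tfrac12$, and compute the probability as $\tfrac{\pi-\arccos(-1/2)}{\pi}=\tfrac13$ from the uniformity of the angle between two independent uniform unit vectors. The only cosmetic difference is that you verify the two neighbor states fail the lemma's inequalities explicitly, whereas the paper dismisses them directly by noting a neighbor of the global minimizer can never be a local minimum.
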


\begin{proof}
When $\vb=0$,  clearly ${\tilde{\xb}}=(-1,-1)$ is a global minimum point, not a local minimum point. It is also clear that ${\tilde{\xb}}=(-1,1)$ or ${\tilde{\xb}}=(1,-1)$ can not be a local minimum point since they are neighbors to the global minimum solution. So the only possible local minimum point is ${\tilde{\xb}}=(1,1)$.

From Lemma \ref{lemma:localcondition}, the corresponding necessary and sufficient condition is
\begin{equation*}
\hb_{1}^{T}\hb_{2} < -\frac{\|\hb_{1}\|^2}{2}=-\frac{\|\hb_{2}\|^2}{2}=-\frac{1}{2}.
\end{equation*}
This means the angle $\theta$ between the two 2-dimensional vectors $\hb_{1}$ and $\hb_{2}$ satisfy $\cos(\theta) <-\frac{1}{2}$. Since $\hb_{1}$ and $\hb_{2}$ are two independent uniform randomly sampled vector, the chance for that to happen is $\frac{\pi-\arccos{(-\frac{1}{2})}}{\pi}=\frac{1}{3}$.
\end{proof}

\begin{theorem}
Consider a $2 \times 2$ matrix $\Hb$ whose elements are independent $\Nc (0,1)$ Gaussian random variables.
When $\vb=0$, the probability of there existing a local minimum for such an $\Hb$ is $\frac{1}{3}-\frac{1}{\sqrt{5}}+\frac{2\arctan(\sqrt{\frac{5}{3}})}{\sqrt{5}\pi}$.
\label{thm:22Gaussian}
\end{theorem}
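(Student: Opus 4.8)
The plan is to reduce the event ``a local minimum exists'' to an explicit geometric condition on the two columns $\hb_1,\hb_2$ of $\Hb$, and then integrate the resulting conditional probability over the joint law of their lengths and relative angle.

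First I would identify the only candidate local minimum. Exactly as in the unit-normed case, when $\vb=\zerob$ the point $(-1,-1)$ is the global minimizer and its one-flip neighbors $(1,-1)$ and $(-1,1)$ cannot be local minima, so the only possible local minimum is $\tilde{\xb}=(1,1)$, i.e.\ $K=\{1,2\}$. Applying Lemma \ref{lemma:localcondition} with $\vb=\zerob$ and $K=\{1,2\}$ yields the two inequalities $\hb_1^T\hb_2<-\tfrac12\|\hb_1\|^2$ and $\hb_1^T\hb_2<-\tfrac12\|\hb_2\|^2$, with no constraint coming from $\overline{K}=\emptyset$. Hence the target probability is $P\big(\hb_1^T\hb_2<-\tfrac12\max(\|\hb_1\|^2,\|\hb_2\|^2)\big)$.

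Next I would pass to polar coordinates, $\hb_i=r_i(\cos\phi_i,\sin\phi_i)$. For a standard two-dimensional Gaussian the angle $\phi_i$ is uniform and independent of the radius $r_i$, so the relative angle $\theta=\phi_1-\phi_2$ is uniform on $[0,2\pi)$ and independent of $(r_1,r_2)$. Writing $\hb_1^T\hb_2=r_1r_2\cos\theta$ and putting $t=r_1/r_2$, the condition becomes $\cos\theta<-\tfrac12\max(t,1/t)$. Conditioned on $t$, uniformity of $\theta$ gives $P(\cos\theta<c)=1-\tfrac{1}{\pi}\arccos(c)$ for $c\ge-1$ and $0$ otherwise; since $\max(t,1/t)\ge1$ this is nonzero only when $\tfrac12\le t\le2$. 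The final ingredient is the law of the ratio: $r_1^2,r_2^2$ are i.i.d.\ $\chi^2_2$ (exponential), so $s=r_1^2/r_2^2$ has density $1/(1+s)^2$ on $(0,\infty)$, a density invariant under $s\mapsto1/s$. This symmetry lets me restrict to $s\ge1$ and double.

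Finally I would evaluate the resulting one-dimensional integral. Substituting $u=\sqrt s$ turns the probability into $4\int_1^2\frac{u}{(1+u^2)^2}\,du-\frac{4}{\pi}\int_1^2\frac{u\,\arccos(-u/2)}{(1+u^2)^2}\,du$. The first integral is elementary and contributes $3/5$. For the second, integration by parts (with antiderivative $-\tfrac{1}{2(1+u^2)}$ and $\tfrac{d}{du}\arccos(-u/2)=1/\sqrt{4-u^2}$) produces a boundary term $\pi/15$ together with $\tfrac12\int_1^2\frac{du}{(1+u^2)\sqrt{4-u^2}}$. This last integral is the technical heart: the substitution $u=2\sin\psi$ followed by $w=\tan\psi$ reduces it to $\int\frac{dw}{1+5w^2}$, which produces the $\arctan\sqrt{5/3}$ term. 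Collecting everything and using $3/5-4/15=1/3$ gives exactly $\frac13-\frac{1}{\sqrt5}+\frac{2\arctan(\sqrt{5/3})}{\sqrt5\,\pi}$. I expect the main obstacle to be the bookkeeping in this last step --- correctly reducing the $\arccos/(1+u^2)^2$ integral and matching the trig-substitution limits so that the arctangent term emerges with the right constants.
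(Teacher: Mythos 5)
Your proposal is correct and follows essentially the same route as the paper: the same reduction via Lemma \ref{lemma:localcondition} to the single candidate $(1,1)$ with condition $\hb_1^T\hb_2 < -\tfrac12\max\{\|\hb_1\|^2,\|\hb_2\|^2\}$, the same polar-coordinate decomposition into radii and a uniform relative angle, and the identical one-dimensional integral $4\int_1^2 \frac{u}{(1+u^2)^2}\bigl(1-\frac{\arccos(-u/2)}{\pi}\bigr)\,du$ (the paper derives the density $\frac{4t}{(t^2+1)^2}$ by computing the tail of $\max\{r_1^2,r_2^2\}/(r_1r_2)$ from the Rayleigh laws, whereas you get it from the exponential-ratio density plus inversion symmetry --- a cosmetic difference). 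The one genuine addition on your side is the explicit evaluation of that integral (integration by parts, $u=2\sin\psi$, $w=\tan\psi$), which the paper omits and which your computation carries out correctly.
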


\begin{proof}
When $\vb=0$,  clearly ${\tilde{\xb}}=(-1,-1)$ is a global minimum point, not a local minimum point. It is also clear that ${\tilde{\xb}}=(-1,1)$ or ${\tilde{\xb}}=(1,-1)$ can not be a local minimum point since they are neighbors to the global minimum solution. So the only possible local minimum point is ${\tilde{\xb}}=(1,1)$.

From Lemma \ref{lemma:localcondition}, the corresponding necessary and sufficient condition is
\begin{equation*}
\hb_{1}^{T}\hb_{2} < -\max\left\{ \frac{\|\hb_{1}\|^2}{2}, \frac{\|\hb_{2}\|^2}{2} \right\}.
\end{equation*}
This means the angle $\theta$ between the two 2-dimensional vectors $\hb_{1}$ and $\hb_{2}$ satisfy
\begin{equation*}
r_1 r_2 \cos(\theta) < -\frac{\max\left\{r_1^2, r_2^2\right\}}{2},
\end{equation*}
where $r_1$ and $r_2$ are respectively the $\ell_2$ norm of $\hb_{1}$ and $\hb_{2}$.

 Because the elements of $\Hb$ are independent Gaussian random variables, $r_1$ and $r_2$ are thus independent random variables following the Rayleigh distribution
\begin{eqnarray*}
p(r_1)&=&r_1 e^{-\frac{r_1^2}{2}}\nonumber\\
p(r_2)&=&r_2 e^{-\frac{r_2^2}{2}};
\end{eqnarray*}
while $\theta$ follows a uniform distribution over $[0, 2\pi)$

By symmetry, for $t\geq 1$,
\begin{eqnarray*}
&&P(\frac{\max\left\{r_1^2, r_2^2\right\}}{r_1 r_2}>t) \\
&=&2\int_{0}^{\infty} r_1 e^{-\frac{r_1^2}{2}}  \times {\int_{0}^{\frac{r_1}{t}} r_2 e^{-\frac{r_2^2}{2}} \,dr_2}                           \,dr_1\\
&=&2\int_{0}^{\infty} r_1 e^{-\frac{r_1^2}{2}}  \times {(1-e^{-\frac{r_1^2}{2}})} \,dr_1\\
&=&2(1-\int_{0}^{\infty} r_1 e^{-(\frac{1}{2}+\frac{1}{2t^2})r_1^2} \,dr_1)\\
&=&\frac{2}{t^2+1}.
\end{eqnarray*}

Since $\theta$ is an independent random variable satisfying  $\cos(\theta) < -\frac{\max\left\{r_1^2, r_2^2\right\}}{2r_1 r_2}$ and  $\cos(\theta) \geq -1$, the probability that ${\tilde{\xb}}=(+1,+1)$ is a local minimum is given by
\begin{eqnarray*}
&P=& \int_{1}^{2} (1-\frac{2}{t^2+1})' (1-\frac{\arccos(-\frac{t}{2})}{\pi}) \,dt\\
&=& \int_{1}^{2} \frac{4t}{(t^2+1)^2} (1-\frac{\arccos(-\frac{t}{2})}{\pi}) \,dt. \\
&=&\frac{1}{3}-\frac{1}{\sqrt{5}}+\frac{2\arctan(\sqrt{\frac{5}{3}})}{\sqrt{5}\pi},
\end{eqnarray*}
which is approximately $0.145696$.
\end{proof}

For higher dimension $N$, it is hard to directly estimate the probability of a vector being a local minimum based on the conditions in Lemma \ref{lemma:localcondition}. Simulation results instead suggest that for large $N$, with high probability, there exists at least one local minimum. The following lemma gives us a sufficient condition. For example, if the sum of $k$ columns has a very small $\ell_2$ norm, that will very likely lead to a local minimum.

\begin{lemma}
\begin{eqnarray}
\|\sum_{j \in K}\hb_{j} -\frac{\vb}{2}\| < \min_{i} \frac{\|\hb_{i}\|}{2}.
\end{eqnarray}

\label{lemma:localsufficient}
\end{lemma}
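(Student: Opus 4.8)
The plan is to derive the two bullet conditions of Lemma~\ref{lemma:localcondition} directly from the displayed inequality, so that (assuming $\tilde{\xb}$ is not the global minimizer) $\tilde{\xb}$ is certified as a local minimum. The guiding observation is that the two one-sided conditions in Lemma~\ref{lemma:localcondition} are precisely the upper and lower halves of a single two-sided bound on the inner products $\hb_i^T\!\left(\sum_{j\in K}\hb_j-\frac{\vb}{2}\right)$, and such inner products are exactly what the Cauchy--Schwarz inequality controls in terms of $\left\|\sum_{j\in K}\hb_j-\frac{\vb}{2}\right\|$.

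Concretely, I would abbreviate $\mathbf{w}=\sum_{j\in K}\hb_j-\frac{\vb}{2}$ and restate Lemma~\ref{lemma:localcondition} as requiring $\hb_i^T\mathbf{w}<\frac{\|\hb_i\|^2}{2}$ for every $i\in K$ and $\hb_i^T\mathbf{w}>-\frac{\|\hb_i\|^2}{2}$ for every $i\in\overline{K}$. The hypothesis $\|\mathbf{w}\|<\min_i\frac{\|\hb_i\|}{2}$ forces $\|\mathbf{w}\|<\frac{\|\hb_i\|}{2}$ for all indices $i$ at once, so by Cauchy--Schwarz $|\hb_i^T\mathbf{w}|\le\|\hb_i\|\,\|\mathbf{w}\|<\|\hb_i\|\cdot\frac{\|\hb_i\|}{2}=\frac{\|\hb_i\|^2}{2}$ for every $i$. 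This single magnitude bound simultaneously yields $\hb_i^T\mathbf{w}<\frac{\|\hb_i\|^2}{2}$ (needed on $K$) and $\hb_i^T\mathbf{w}>-\frac{\|\hb_i\|^2}{2}$ (needed on $\overline{K}$), so both conditions of Lemma~\ref{lemma:localcondition} hold and $\tilde{\xb}$ is a local minimum.

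I do not anticipate a real obstacle: once the two bullets are recognized as a symmetric bound, the argument is essentially a one-line application of Cauchy--Schwarz. The two points worth care are quantifier scope and the global-minimum caveat. The $\min_i$ must range over all columns $i\in\{1,\dots,N\}$, not just over $K$ or $\overline{K}$, because we must bound the inner products for both families of indices; and the hypothesis alone does not exclude $\tilde{\xb}=-\oneb$ (take $K=\emptyset$, where the condition reduces to $\|\vb\|<\min_i\|\hb_i\|$), so the statement must retain the proviso that $\tilde{\xb}$ is not the global minimum, exactly as in Lemma~\ref{lemma:localcondition}.
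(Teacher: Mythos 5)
Your proposal is correct and matches the paper's own (very terse) proof, which simply asserts that the hypothesis gives $|\hb_{i}^T (\sum_{j \in K}\hb_{j} -\frac{\vb}{2})| < \frac{\|\hb_{i}\|^2}{2}$ --- i.e., exactly your Cauchy--Schwarz step feeding both bullets of Lemma~\ref{lemma:localcondition}. Your added remarks on quantifier scope and the need to exclude the global minimizer are sound clarifications of details the paper leaves implicit, but the underlying argument is the same.
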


\begin{proof}
This follows from  $|\hb_{i}^T (\sum_{j \in K}\hb_{j} -\frac{\vb}{2})| < \frac{\|\hb_{i}\|^2}{2}$.
\end{proof}

\begin{theorem}
Consider an $N \times N$ matrix $\Hb$ whose $N$ columns are uniform randomly sampled from the unit-normed $N$-dimensional vector.
When $\vb=0$, then the expected number of local minima for such an $\Hb$ is $\Ec(N_{local}) \geq \sum_{k=2}^{N}{\binom{N}{k}}P_{k}$, where $P_{k}$ is the probability that the magnitude of the sum of $k$ uniform randomly sampled vectors is less than $\frac{1}{2}$.
\end{theorem}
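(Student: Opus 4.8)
The plan is to bound the expected count by linearity of expectation over the $2^N$ candidate sign vectors, and to replace the exact (necessary and sufficient) local-minimum characterization of Lemma \ref{lemma:localcondition} by the simpler sufficient condition of Lemma \ref{lemma:localsufficient}, which depends only on the norm of a partial column sum. I would index candidate states by the support set $K \subseteq \{1,\dots,N\}$ of their $+1$ entries, writing $\tilde{\xb}_K$ for the vector that equals $+1$ on $K$ and $-1$ on $\overline{K}$, and letting $X_K$ be the indicator that $\tilde{\xb}_K$ is a local minimum. Then $N_{local} = \sum_K X_K$, and by linearity of expectation $\Ec(N_{local}) = \sum_K P(\tilde{\xb}_K \text{ is a local minimum})$. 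Since every summand is nonnegative, I may simply discard the contributions of $|K|=0$ (the global minimizer $-\oneb$) and $|K|=1$ (its one-flip neighbors, which by definition can never be local minima) and retain only the terms with $|K|=k\ge 2$.

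The key step is the per-subset lower bound. With $\vb=0$ and $\|\hb_i\|=1$, Lemma \ref{lemma:localsufficient} reduces to the statement that $\tilde{\xb}_K$ is a local minimum whenever $\|\sum_{j\in K}\hb_j\| < \tfrac12$: the Cauchy--Schwarz estimate $|\hb_i^T\sum_{j\in K}\hb_j| \le \|\sum_{j\in K}\hb_j\| < \tfrac12 = \tfrac{\|\hb_i\|^2}{2}$ forces both inequalities of Lemma \ref{lemma:localcondition} to hold simultaneously, while $\tilde{\xb}_K\neq-\oneb$ is not the (almost surely unique) global minimizer. Hence $P(\tilde{\xb}_K \text{ is a local minimum}) \ge P(\|\sum_{j\in K}\hb_j\|<\tfrac12)$. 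Because the columns are i.i.d. and their common law is rotationally invariant, the distribution of $\|\sum_{j\in K}\hb_j\|$ depends on $K$ only through its cardinality $k$, so this probability equals $P_k$ for every $K$ with $|K|=k$. Summing over the $\binom{N}{k}$ subsets of size $k$ and over $k$ from $2$ to $N$ then yields $\Ec(N_{local}) \ge \sum_{k=2}^{N}\binom{N}{k}P_k$.

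I expect the delicate points to be bookkeeping rather than hard analysis. One must justify dropping the $k=0$ and $k=1$ terms (the global minimizer and its neighbors are excluded, so their indicators are legitimately zero or irrelevant to a lower bound), and one must be clear that the result is only a lower bound precisely because Lemma \ref{lemma:localsufficient} is sufficient but not necessary: the event $\{\|\sum_{j\in K}\hb_j\|<\tfrac12\}$ may be strictly smaller than the true local-minimum event. The exchangeability argument that collapses all per-subset probabilities of a given size to the single quantity $P_k$ is the one place where the i.i.d., rotationally symmetric sampling of the columns is genuinely essential, and it is where I would take the most care.
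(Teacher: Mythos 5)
Your proposal is correct and follows essentially the same route as the paper: the paper's (very terse) proof likewise combines the local-minimum characterization of Lemma \ref{lemma:localcondition} with the sufficient small-column-sum condition of Lemma \ref{lemma:localsufficient}, linearity of expectation, and the count of $\binom{N}{k}$ supports of size $k$. Your write-up merely makes explicit the details the paper leaves implicit (the Cauchy--Schwarz step, the exchangeability argument giving a single $P_k$ per cardinality, and the exclusion of the $k=0,1$ terms), all of which are consistent with the paper's intent.
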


\begin{proof}
This follows from Lemma \ref{lemma:localcondition} and the fact that there are $\binom{N}{k}$ vectors for ${\tilde{\xb}}$ which have exactly $k$ +1 in it.
\end{proof}

\section{Choice of Temperature $\alpha$ in High $SNR$}
\label{sec:choice_alpha}
In previous sections, we have looked at the mixing time of MCMC for an integer LS problem. Now we use the results we have accumulated so far to help choose the appropriate temperature of $\alpha$ to ensure that the MCMC mixes fast and that the optimal solution also comes up fast when the system is in a stationary distribution.

When $SNR \rightarrow \infty$, the integer LS problem will have the same local minima as the case $\vb=0$. From the derivations and simulations, it is suggested that with high probability there will be at least one local minimum in the integer LS problem, especially for large problem dimension $N$.

So following from Lemma \ref{thm:mixing_scale_alpha} and the reasoning therein, to ensure there is an upper bound on the mixing time as $SNR\rightarrow \infty$, the temperature $\alpha$ should at least grow at a rate such that
\begin{equation*}
\max_{\tilde{\xb}} \min_{\tilde{\xb}'}\frac{\frac{SNR}{N} \left(\|-\Hb\mathbf{1}-\tilde{\xb}'\|^2-\|-\Hb\mathbf{1}-\tilde{\xb}\|^2\right)}{2\alpha^2} \leq C, 
\end{equation*}
where $\tilde{\xb}$ is a local minimum and $\tilde{\xb}'$ is a neighbor of $\tilde{\xb}$, and $C$ is a constant.

This will require that $\alpha^2$ grow as fast as $\Omega(SNR)$ to ensure fast mixing with the existence of local minima. This explains that if we keep the temperature at the noise level, it will lead to slow convergence in the high SNR regime \cite{Farhang_Boroujeny_06}. 
\section{Simulation Results}
\label{sec:sim_results}
In this section we present simulation results for an $N \times N$ system with a full square channel matrix containing i.i.d. Gaussian entries.

In Figure \ref{fig:N_expected}, we plot the expected number of local minima in a system as the problem dimension $N$ grows. For each $N$, we generate $100$ random channel matrices and for each matrix, we examine the number of local minima by exhaustive search. As the problem dimension $N$ grows, the number of local minima grows rapidly.

In Figure \ref{fig:N_frequency}, we plot the probability of there existing a local minimum as the problem dimension $N$ grows. For each $N$, we generate $100$ random channel matrices and for each matrix, we examine whether there exists local minimum by exhaustive search. As $N$ grows, the empirical probability of there existing at least one local minimum approaches $1$. It is interesting to see that for $N=2$, our theoretical result $\frac{1}{3}-\frac{1}{\sqrt{5}}+\frac{2\arctan(\sqrt{\frac{5}{3}})}{\sqrt{5}\pi}\approx0.15$ matches well with the simulations.

We also examine how the spectral gap for MCMC is related to the existence of local minima. For $N=5$ and $SNR=10$, we randomly generated $10$ problem instances and keep the temperature $\alpha^2=1$ the same as the noise variance. Out of the $10$ trials, the number of local minima are  $2,     1,     0,     0,     0,     2,     0,0,     2$ and  $0$. The corresponding spectral gaps are respectively $0.0037,   0.0008,    0.1244,    0.1957,    0.1989$,    $0.0011$,    $0.1698$,    $0.1764$,    $5 \times 10^{-10}$, and $0.1266$. It can be seen that when there exist local minima, the spectral gap is significantly smaller than the cases without local minima. This implies a slower mixing for the systems with local minima, which is consistent with our theoretical results.

\begin{figure}[tb]
\centering
\includegraphics[width=3.5in, height=2.5in]{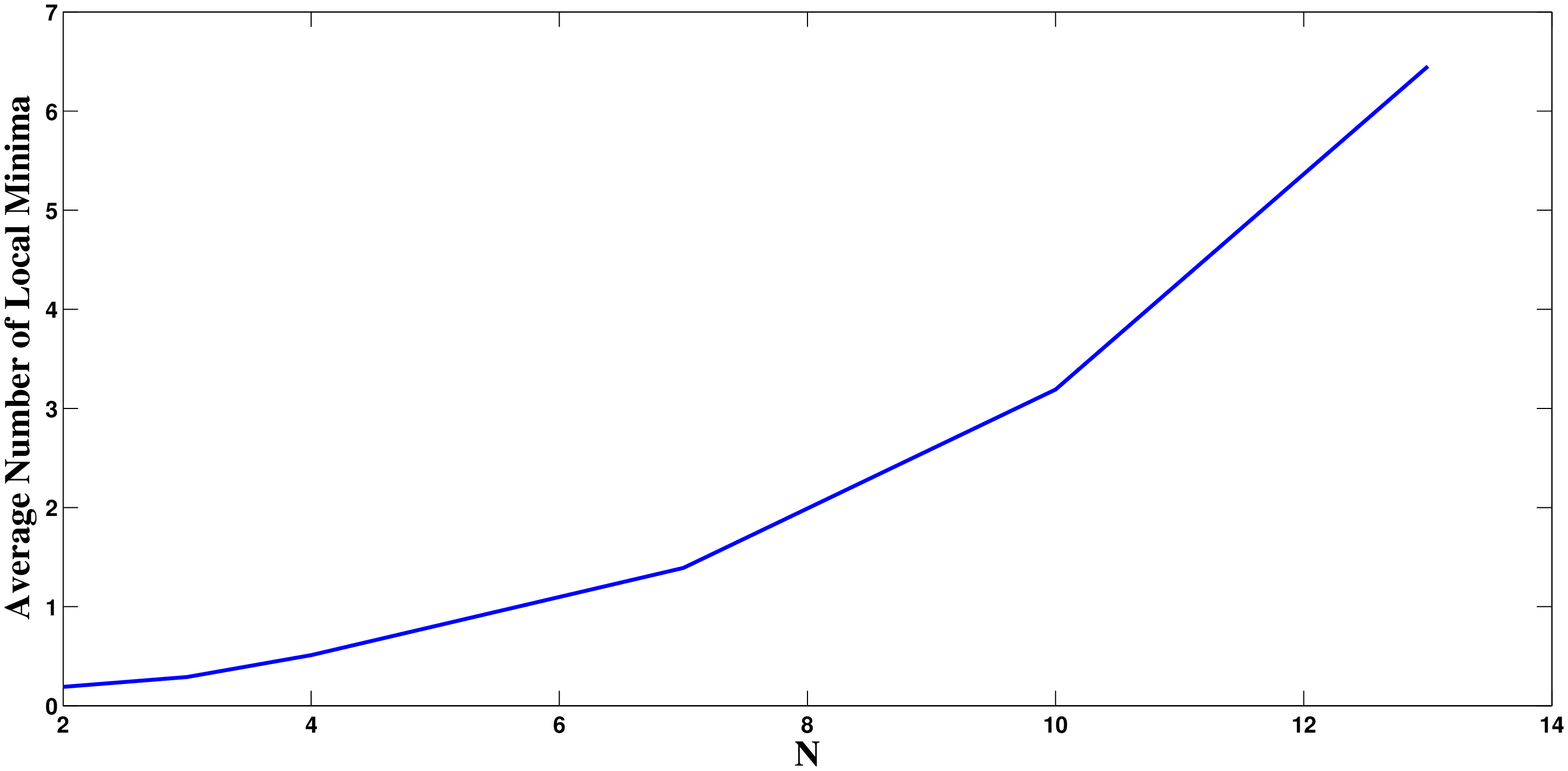}
\caption{Average Number of Local Minima}
\label{fig:N_expected}
\end{figure}
\begin{figure}[tb]
	\centering
  \includegraphics[width=3.5in,height=2.5in]{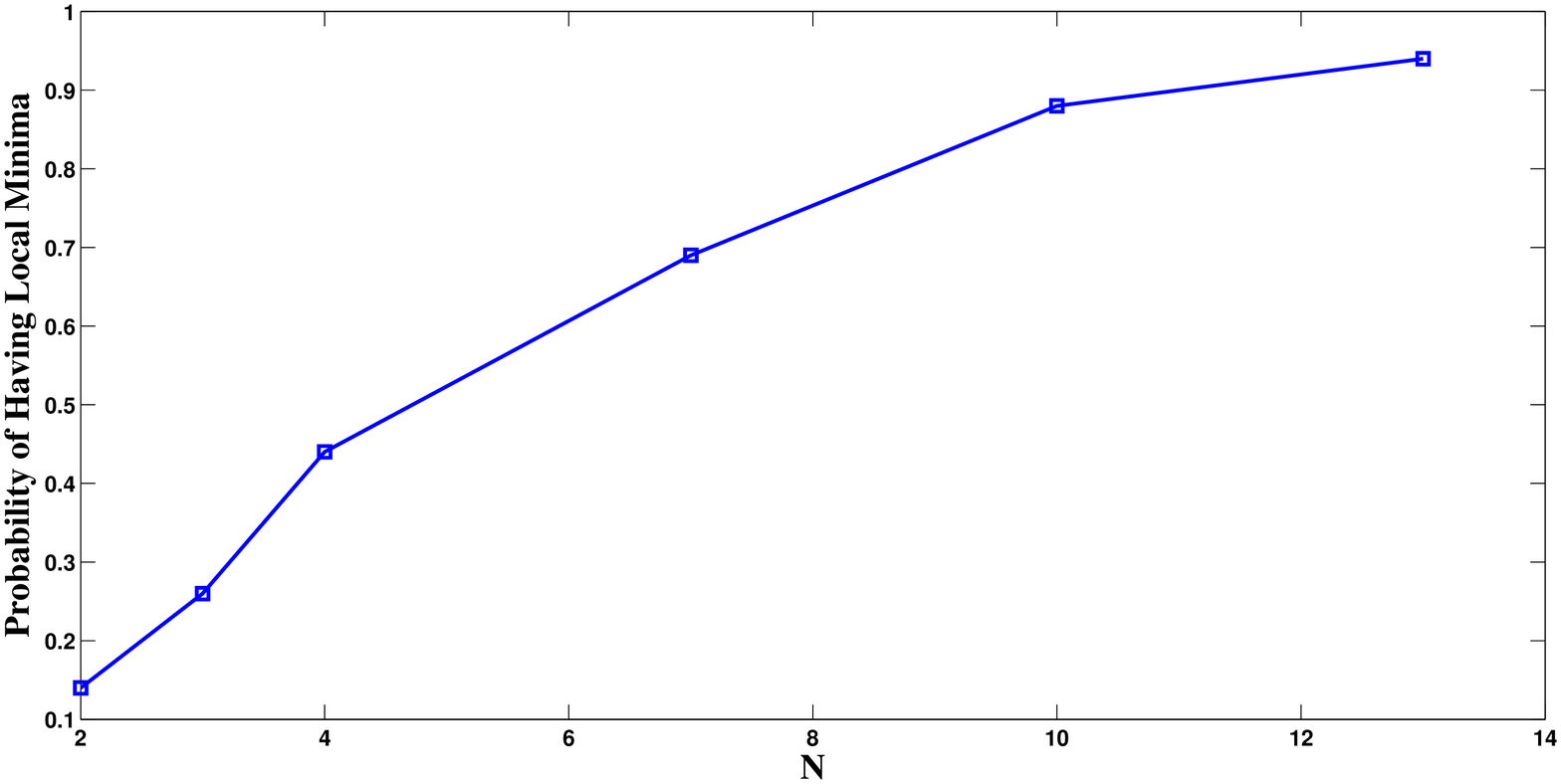}
	\vspace{-2mm}
	\caption{The Probability of Having Local Minima}
	\label{fig:N_frequency}
\end{figure}
\section{Conclusion}
\label{sec:Conclusion}
In this paper, we study the mixing time of Markov Chain Monte Carlo (MCMC) for the integer least-square optimization problem. It is found that the mixing time of MCMC for the integer least-square problem depends on the structure of the underlying lattice. More specifically, the mixing time of MCMC is found to be closely related to whether there is a local minimum in the lattice structure of the integer least-square problem. For some lattices, the mixing time of the Markov chain is independent of the signal-to-noise ratio; while for some lattices, the mixing time is correlated with the signal-to-noise ratio. We also derive the probability that there exist local minima in an integer least-square problem, which can be as high as $\frac{1}{3}-\frac{1}{\sqrt{5}}+\frac{2\arctan(\sqrt{\frac{5}{3}})}{\sqrt{5}\pi}$. Both theoretical and empirical results suggest that to ensure fast mixing for the MCMC for the integer least-square problem, the temperature for MCMC should often grow as the signal-noise-ratio increases. 
\bibliographystyle{IEEEtran}
\bibliography{refs}

\begin{thebibliography}{10}
\providecommand{\url}[1]{#1}
\csname url@samestyle\endcsname
\providecommand{\newblock}{\relax}
\providecommand{\bibinfo}[2]{#2}
\providecommand{\BIBentrySTDinterwordspacing}{\spaceskip=0pt\relax}
\providecommand{\BIBentryALTinterwordstretchfactor}{4}
\providecommand{\BIBentryALTinterwordspacing}{\spaceskip=\fontdimen2\font plus
\BIBentryALTinterwordstretchfactor\fontdimen3\font minus
  \fontdimen4\font\relax}
\providecommand{\BIBforeignlanguage}[2]{{%
\expandafter\ifx\csname l@#1\endcsname\relax
\typeout{** WARNING: IEEEtran.bst: No hyphenation pattern has been}%
\typeout{** loaded for the language `#1'. Using the pattern for}%
\typeout{** the default language instead.}%
\else
\language=\csname l@#1\endcsname
\fi
#2}}
\providecommand{\BIBdecl}{\relax}
\BIBdecl

\bibitem{Agrell_et_al_02}
E.~Agrell, T.~Eriksson, A.~Vardy, and K.~Zeger, ``{Closest point search in
  lattices},'' \emph{IEEE Transactions on Information Theory}, vol.~48, no.~8,
  pp. 2201--2214, 2002.

\bibitem{Borno}
M.~A. Borno, ``{Reduction in Solving Some Integer Least Squares Problems},''
  \emph{Thesis, McGill University}, 2011.

\bibitem{Damen_et_al}
M.~O. Damen, H.~E. Gamal, and G.~Caire, ``{On Maximum-Likelihood Detection and
  the Search for the Closest Lattice Point},'' \emph{IEEE Trans. on Info.
  Theory}, vol.~49, pp. 2389--2402, Oct. 2003.

\bibitem{Hochwald_Ten-Brink_03}
B.~M. Hochwald and S.~Ten~Brink, ``{Achieving near-capacity on a
  multiple-antenna channel},'' \emph{IEEE Trans. on Commun.}, vol.~51, no.~3,
  pp. 389--399, 2003.

\bibitem{Hassibi_1}
B.~Hassibi and H.~Vikalo, ``{On the Sphere-Decoding Algorithm. I. Expected
  Complexity},'' \emph{IEEE Trans. on Sig. Proc.}, vol.~53, pp. 2806--2818,
  Aug. 2005.

\bibitem{Ottersten_05}
J.~Jald\'{e}n and B.~Ottersten, ``{On the Complexity of Sphere Decoding in
  Digital Communications},'' \emph{IEEE Trans. on Sig. Proc.}, vol.~53, pp.
  1474--1484, Apr. 2005.

\bibitem{Levin}
D.~Levin, Y.~Peres, and E.~Wilmer, \emph{Markov Chains and Mixing Times}.\hskip
  1em plus 0.5em minus 0.4em\relax American Mathematical Society, 2008.

\bibitem{Haggstrom_02}
O.~H{\"a}ggstr{\"o}m, \emph{{Finite Markov chains and algorithmic
  applications}}.\hskip 1em plus 0.5em minus 0.4em\relax Cambridge University
  Press, 2002.

\bibitem{Zhu_Farhang_Boroujeny_05}
H.~Zhu, B.~Farhang-Boroujeny, and R.~Chen, ``{On performance of sphere decoding
  and Markov chain Monte Carlo detection methods},'' \emph{IEEE Sig. Proc.
  Letters}, vol.~12, pp. 669--672, 2005.

\bibitem{Wang_Poor_03}
X.~Wang and V.~H. Poor, \emph{Wireless Communications Systems: Advanced
  Techniques for Signal Reception}.\hskip 1em plus 0.5em minus 0.4em\relax
  Prentice Hall, 2003.

\bibitem{Hassibi_Globecom}
M.~Hansen, B.~Hassibi, A.~Dimakis, and W.~Xu, ``{Near-Optimal Detection in MIMO
  Systems using Gibbs Sampling },'' in \emph{Globecom'09}, 2009.

\bibitem{Farhang_Boroujeny_06}
B.~Farhang-Boroujeny, H.~Zhu, and Z.~Shi, ``{Markov chain Monte Carlo
  algorithms for CDMA and MIMO communication systems},'' \emph{IEEE Trans. on
  Sig. Proc.}, vol.~54, no.~5, pp. 1896--1909, 2006.

\bibitem{ChenRong}
R.~Chen, S.~J. Liu, and X.~Wang, ``{Convergence analyses and comparisons of
  Markov chain Monte Carlo algorithms in digital communications},'' \emph{IEEE
  Transactions on Signal Processing}, vol.~50, pp. 255--270, 2002.

\bibitem{Mackay_03}
D.~MacKay, \emph{{Information theory, inference and learning
  algorithms}}.\hskip 1em plus 0.5em minus 0.4em\relax Cambridge University
  Press, 2003.

\bibitem{JerrumSinclair}
M.~Jerrum and A.~Sinclair, ``{Approximating the Permanent},'' \emph{SIAM
  Journal on Computing}, vol.~18, pp. 1149--1178, 1989.

\bibitem{Lawler}
G.~Lawler and A.~Sokal, ``{Bounds on the $L^2$ spectrum for Markov Chains and
  Markov Processes: a Generalization of Cheeger's Inequality},'' \emph{Trans.
  Amer. Math. Soc.}, vol. 309, pp. 557--580, 1988.

\end{thebibliography}

\end{document}